\renewcommand\expandafter\subsection\expandafter{%
    \expandafter\@fb@secFB\subsection
  }%
\newtheorem{proposition}{Proposition}[section]
\theoremstyle{definition}
\newtheorem{definition}{Definition}[section]
\theoremstyle{definition}
\newtheorem{exmp}{Example}[section]
\theoremstyle{remark}
\newtheorem{assumption}{Assumption}[section]
\theoremstyle{remark}
\newtheorem{remark}{Remark}[section]
\theoremstyle{remark}
\bmdefine{\bA}{A}
\bmdefine{\bB}{B}
\bmdefine{\bC}{C}
\bmdefine{\bc}{c}
\bmdefine{\bD}{D}
\bmdefine{\bd}{d}
\bmdefine{\be}{e}
\bmdefine{\bF}{F}
\bmdefine{\bG}{G}
\bmdefine{\bH}{H}
\bmdefine{\bI}{I}
\bmdefine{\bK}{K}
\bmdefine{\bV}{V}
\bmdefine{\bM}{M}
\bmdefine{\bf}{f}
\bmdefine{\bq}{q}
\bmdefine{\bT}{T}
\bmdefine{\bt}{t}
\bmdefine{\bu}{u}
\bmdefine{\bU}{U}
\bmdefine{\bx}{x}
\bmdefine{\bX}{X}
\bmdefine{\by}{y}
\bmdefine{\bY}{Y}
\bmdefine{\bomega}{\omega}
\bmdefine{\bupsilon}{\upsilon}
\bmdefine{\bP}{P}
\bmdefine{\bQ}{Q}
\bmdefine{\bS}{S}
\bmdefine{\bR}{R}
\bmdefine{\bJ}{J}
\bmdefine{\bV}{V}
\bmdefine{\bW}{W}
\bmdefine{\bO}{O}
\bmdefine{\bzeta}{\zeta}
\bmdefine{\beeta}{\eta}
\bmdefine{\bLambda}{\Lambda}
\bmdefine{\bGamma}{\Gamma}
\bmdefine{\balpha}{\alpha}
\bmdefine{\bbeta}{\beta}
\bmdefine{\bphi}{\phi}
\bmdefine{\bpsi}{\psi}
\bmdefine{\bSigma}{\Sigma}
\bmdefine{\bzero}{0}
\def\els@aparagraph[#1]#2{\elsparagraph[#1]{#2}}
\def\els@bparagraph#1{\elsparagraph*{#1}}
\def\ps@pprintTitle{%
    \let\@oddhead\@empty
    \let\@evenhead\@empty
    \let\@oddfoot\@empty
    \let\@evenfoot\@oddfoot
}
\journal{Elsevier}
\begin{document}

\begin{frontmatter}

\title{Blind Identification of State-Space Models in Physical Coordinates}

\author{Runzhe Han\corref{mycorrespondingauthor}}
\ead{runzhe.han@alumni.tu-clausthal.de}
\cortext[mycorrespondingauthor]{Corresponding author}
\author{Christian Bohn\corref{}}
\author{Georg Bauer\corref{}}
\address{Institut f{\"u}r Elektrische Informationstechnik (IEI) \\
Technische Universit{\"a}t Clausthal \\
Leibnizstr. 28, 38678 Clausthal-Zellerfeld, Germany}

\begin{abstract}
Blind identification is popular for modeling a system without the input information, such as in the research areas of structural health monitoring and audio signal processing. Existing blind identification methods have both advantages and disadvantages, in this paper, we briefly outline current methods and propose a novel blind identification method for identifying state-space models in physical coordinates. The idea behind this proposed method is first to regard the collected input data of a state-space model as a part of a periodic signal sequence, and then transform the state-space model with input and output into a model without input by augmenting the state-space model with the input model (which is a periodic signal model), and afterwards use merely the output information to identify a state-space model up to a similarity transformation, and finally derive the state-space model in physical coordinates by using a unique similarity transformation. With the above idea, physical parameters and modal parameters of a state-space system can be obtained. Both numerical and practical examples were used to validate the proposed method. The result showed the effectiveness of the novel blind identification method.
\end{abstract}

\begin{keyword}
State-space model in physical coordinates \sep periodic signal model \sep subspace model identification \sep output-only identification \sep modal parameter estimation \sep input estimation \sep structural health monitoring
\end{keyword}

\end{frontmatter}


\section{Introduction}\label{s1}
System identification is a process to derive the mathematical model of a static or a dynamic system using experimental data. The problem of system identification in control community has been explored in both time domain (e.g., time-series models and state-space models) and frequency domain (e.g., frequency response functions) since $1960$s \cite{Gevers2006}. Among all mathematical model structures, state-space models are popular in the fields of mechanical engineering (e.g., structural health monitoring) and control engineering (e.g., model predictive control). As a typical example, in the research area of structural health monitoring, the linear finite element technique (for the nonlinear one, see \cite{Wriggers2008}) is used almost exclusively for constructing analytical models in the form of second-order differential equations \cite{Agbabian1991, Juang1994}. These analytical models can be reformulated as a first-order system of differential equations in a number of ways, such as a state-space model, thus with experimental data modal parameters (e.g., natural frequency, natural mode, and damping ratio) can be derived from the state-space model identified by various existing state-space system identification methods, such as observer-based methods (see, e.g., \cite{Juang1994}, and the references therein) or subspace-based state-space system identification ($4$SID, pronounced ``forsid'') (see, e.g., \cite{Peeters1999, Basseville2001, Peeters2001, Mevel2006}, and the references therein). A summary of $4$SID methods can be referred to \cite{Overschee1996}, \cite{Ljung1999}, and \cite{Katayama2006}. Furthermore, if the identified state-space model is in physical coordinates, of which the specific meaning can be found in \cite{Phan2004}, both system parameters (also called physical parameters, e.g., stiffness and damping) and modal parameters of the system of interest can be obtained, as evidenced by the works of \cite{Angelis2002, Lus2003, Phan2004, Kim2012a, Kim2012b, Park2017} (and the related references therein), in which both experimental input and output data were used for identification. It should be noted that both system parameters and modal parameters can be used for the structural health monitoring purpose, but the former ones have shown to have higher sensitivity with respect to structural damages and less susceptibility to temporary environmental effects \cite{Ghobadi2017}. Therefore, the estimation of physical parameters is more desirable in structural health monitoring applications. Additionally, if physical parameters are obtained, modal parameters can be naturally derived then.

However, for state-space system in physical coordinates identification, one case that the information of input sources can be unknown or obtained under adverse conditions (such that measurement sensors can have a short lifecycle) is usually encountered in practice, so output-only identification methods are preferred. Generally, there are two frameworks for dealing with the identification of physical parameters and modal parameters without input information: i) output-only system identification (see, e.g., \cite{Caicedo2004}, and \cite{Alicioglu2008} without physical parameter identification), and ii) blind system identification (which can simultaneously identify system parameters and modal parameters as well as the history of unknown input excitations) (see, e.g., \cite{Wang1997, Shi2016, Hussein2016, Ghobadi2017}). Output-only system identification usually assumes the unknown inputs have a certain known characteristics such as white noise process or general stationary random process, which can be violated in practical applications. As a comparison, blind identification avoids this assumption and can even estimate the unknown inputs.

In \cite{Bohn2004} and \cite{Han2018a}, based on frequency-modulated signal modeling, an augmented model with only output can be derived. Inspired by the formation of the output-only augmented model, in this paper, a novel blind identification method is proposed to identify both system parameters and modal parameters of a state-space model in physical coordinates as well as the input. The proposed blind identification method can be seen as an extension of the identification method (named AOSID) proposed in \cite{Ghobadi2017} (which was validated as an effective method), but it is mainly different in the following three aspects:

\begin{enumerate}[{(1)}]
\item
The conventional stochastic subspace identification method (see Chapter $3$ in \cite{Overschee1996}) is used to identify the augmented model, instead of using the eigensystem realization algorithm (ERA) (see \cite{Juang1994}) in AOSID.
\item
The assumption that there does not exist frequency overlapping between natural frequencies of the physical system and exciting frequencies is not necessary.
\item
The singular value decomposition (SVD) technique becomes unnecessary for the order determination of the signal model.
\end{enumerate}

There are other contributions of the proposed method which will be illustrated in the following sections.

The rest of the paper is organized as follows. Section \ref{s2} formulates the identification problem of state-space models in physical coordinates. Section \ref{s3} describes both periodic signal modeling and augmented model derivation, followed by Section \ref{s4}, in which a novel blind identification method is proposed. Section \ref{s5} and Section \ref{s6} demonstrate the performance of the proposed identification method using both numerical and practical examples. Finally some conclusions are made in Section \ref{s7}. It should be noted that in this paper, the proposed blind identification method is an offline identification method.

\section{Problem formulation}\label{s2}
In the structures field, finite element modeling is of prime importance. The fundamental equation describing the dynamic behavior of a structure can be discretized by finite element, each element can stand for one degree of freedom. The second-order equation of motion of an n-degree-of-freedom mechanical system with inputs has the general form

\begin{equation}\label{eq1}
\bM\ddot{\bq}(t)+\bD\dot{\bq}(t)+\bK\bq(t)=\bB\bu(t)=\bf_{\rm e}(t),
\end{equation}
where $\bq(t)\in\mathbb{R}^n$ is the absolute displacement, $\bM\in\mathbb{R}^{n\times n}$, $\bD\in\mathbb{R}^{n\times n}$, and $\bK\in\mathbb{R}^{n\times n}$ are mass, damping, and stiffness matrices, respectively. The matrix $\bB\in\mathbb{R}^{n\times r}$ is the input influence matrix of the second-order equation, where typically $r\leqslant n$. If the system has a full set of actuators, the matrix $\bB$ will be square and full rank. The input $\bu(t)$ is a band-limited signal. We call $\bf_{\rm e}(t)$ effective input.

For the system (\ref{eq1}), if there is a full set of $n$ sensors measuring displacements, velocities, or accelerations, then we can define the measurements as
\begin{equation}\label{eq2}
\by(t)=\bC_{\rm{p}}\bq(t)+\bC_{\rm{v}}\dot{\bq}(t)+\bC_{\rm{ac}}\ddot{\bq}(t)+\be_{\rm m}(t),
\end{equation}
where $\bC_{\rm{p}}$, $\bC_{\rm{v}}$, and $\bC_{\rm{ac}}\in\mathbb{R}^{m\times n}$ are measurement sensitivity matrices which are known from the type and placement of the sensors in most mechanical systems.

The following assumptions are made for the system (\ref{eq1}) in this paper:

\begin{assumption}
A full-state observation is assumed through this paper, which means that the number of sensors are exactly the same as the number of degree of freedom (DOF) of the system (\ref{eq1}), i.e., $m=n$. $\be_{\rm m}(t)$ denotes the measurement noise.
\end{assumption}

\begin{assumption}\label{assum2}
Only the acceleration sensor is used, i.e., $\bC_{\rm{p}}=\bC_{\rm{v}}=\bzero\in\mathbb{R}^{n\times n}$.
\end{assumption}

According to the modern control theory and the above assumptions, the system described by Equations (\ref{eq1}) and (\ref{eq2}) can be reformulated in the form of a first-order state-space model in physical coordinates
\begin{equation}\label{eq3}
    \begin{dcases}
        \dot\bx_{\rm{s}}(t)=\bA_{\rm{s}}\bx_{\rm{s}}(t)+\bB_{\rm{s}}\bu(t), \\
        \by(t)=\bC_{\rm{s}}\bx_{\rm{s}}(t)+\bD_{\rm{s}}\bu(t)+\be_{\rm m}(t), \\
    \end{dcases}
\end{equation}
where
$$\bx_{\rm{s}}=
\begin{pmatrix}
\bq(t) \\
\dot\bq(t) \\
\end{pmatrix},
\bA_{\rm{s}}=
\begin{pmatrix}
\bzero & \bI \\
-\mathcal{K} & -\mathcal{D} \\
\end{pmatrix},
\bB_{\rm{s}}=
\begin{pmatrix}
\bzero \\
\mathcal{B} \\
\end{pmatrix},$$
$$\bC_{\rm{s}}=
\begin{pmatrix}
\bC_{\rm{p}}-\bC_{\rm{ac}}\mathcal{K} & \bC_{\rm{v}}-\bC_{\rm{ac}}\mathcal{D} \\
\end{pmatrix},
\bD_{\rm{s}}=\bC_{\rm{ac}}\mathcal{B},$$
and the matrix $\bI$ denotes an identity matrix, $\mathcal{K}=\bM^{-1}\bK$, $\mathcal{D}=\bM^{-1}\bD$, and $\mathcal{B}=\bM^{-1}\bB$. (We call $\mathcal{K}$, $\mathcal{D}$, and $\mathcal{B}$ normalized stiffness, damping, and input matrices with respect to the mass matrix $\bM$, respectively.)

In this paper, we attempt to identify physical parameters ($\mathcal{K}$ and $\mathcal{D}$), modal parameters (damping ratios and natural frequencies) of the system (\ref{eq1}), and the effective input $\bf_e(t)$ by only using the output signal $\by(t)$ offline. So the problem of this paper is formulated.

\section{Periodic signal modeling and augmentation}\label{s3}
In this section, the modeling of a periodic signal is first described. Specifically both continuous-time periodic signal model and discrete-time periodic signal model are presented, and then the process of augmenting a transfer path with the input signal model is illustrated. The reason why augmentation is needed will be shown in Section \ref{s4}.

\subsection{Periodic signal modeling}\label{periodic signal model}
A continuous-time period signal $u_{\rm{T}}(t)\in\mathbb{R}$ can be represented as a Fourier series with amplitudes $a_{i}$ and phases $\alpha_{i}$, that is
\begin{equation}\label{eq4}
u_{\rm{T}}(t)=a_{0}+\sum_{i=1}^{\infty}a_{i}\sin(i2\pi f_{\rm{T}}t+\alpha_{i}),
\end{equation}
where $f_{T}$ is the fundamental frequency.

However, if the periodic signal $u_{\rm{T}}$ is band-limited, it can be approximately represented as a Fourier series with limited terms, that is
\begin{equation}\label{eq5}
u_{\rm{T}}(t)=a_{0}+\sum_{i=1}^{n_{\rm{a}}}a_{i}\sin(i2\pi f_{\rm{T}}t+\alpha_{i})+e_{\rm o}(t),
\end{equation}
where $n_{\rm{a}}$ is the considered number of the harmonics which depends on both fundamental frequency and maximum frequency of $u_{\rm{T}}(t)$. $e_{\rm o}(t)$ denotes the part of $u_{\rm{T}}(t)$ which is not considered.

Based on the above periodic signal introduction, in the following of this subsection, the periodic signal modeling is described in both continuous time and discrete time.

\subsubsection{Continuous-time periodic signal modeling}\label{ctpm}
According to (\ref{eq5}) and \cite{Bohn2005}, $u_{\rm{T}}(t)$ can be modeled as the output of an autonomous state-space model, that is
\begin{equation}\label{eq6}
    \begin{dcases}
        \dot\bx_{u_{\rm{T}}}(t)=\bA_{u_{\rm{T}}}\bx_{u_{\rm{T}}}(t), \\
        u_{\rm{T}}(t)=\bC_{u_{\rm{T}}}\bx_{u_{\rm{T}}}(t)+e_{\rm o}(t), \\
    \end{dcases}
\end{equation}
where $\bx_{u_{\rm{T}}}(t)\in\mathbb{R}^{2n_{\rm{a}}-1}$ is the state vector.

The matrices $\bA_{u_{\rm{T}}}$ and $\bC_{u_{\rm{T}}}$ of the system (\ref{eq6}) are given as
\begin{equation}\label{eq7}
\bA_{u_{\rm{T}}}=
\begin{pmatrix}
0 & \bzero & \cdots & \bzero \\
\bzero & \bA_{u_{\rm{T}},\rm{1}} & \ddots & \vdots \\
\vdots & \ddots & \ddots & \bzero \\
\bzero & \cdots & \bzero & \bA_{u_{\rm{T}},n_{\rm{a}}-1} \\
\end{pmatrix}
\end{equation}
and
\begin{equation}\label{eq8}
\bC_{u_{\rm{T}}}=
\begin{pmatrix}
1 & \bC_{u_{\rm{T}},\rm{1}} & \cdots & \bC_{u_{\rm{T}},n_{\rm{a}}-1} \\
\end{pmatrix}.
\end{equation}

The individual block entries in these block matrices can be represented as
\begin{equation}\label{eq9}
\bA_{u_{\rm{T}},i}=
\begin{pmatrix}
0 & i2\pi f_{\rm{T}} \\
-i2\pi f_{\rm{T}} & 0 \\
\end{pmatrix}
\end{equation}
and
\begin{equation}\label{eq10}
\bC_{u_{\rm{T}},i}=
\begin{pmatrix}
1 & 0 \\
\end{pmatrix}.
\end{equation}

It should be noted that the matrix $\bA_{u_{\rm{T}}}$ is not stable which contains the eigenvalue $0$.

\subsubsection{Discrete-time periodic signal modeling}\label{dtpm}
Similar to the continuous-time periodic signal $u_{\rm{T}}(t)$, its discrete-time version $u_{\rm{T}}(k)$ can be also modeled as the output of an autonomous state-space model \cite{Bohn2005}, that is
\begin{equation}\label{eq11}
    \begin{dcases}
        \dot\bx_{u_{\rm{T}}}(k)=\bA_{u_{\rm{T}}}^{\rm{d}}\bx_{u_{\rm{T}}}(k), \\
        u_{\rm{T}}(k)=\bC_{u_{\rm{T}}}\bx_{u_{\rm{T}}}^{\rm{d}}(k)+e_{\rm o}(k), \\
    \end{dcases}
\end{equation}
where $\bx_{u_{\rm{T}}}(k)\in\mathbb{R}^{2n_{\rm{a}}-1}$ is the state vector.
The matrices $\bA_{u_{\rm{T}}}^{\rm{d}}$ and $\bC_{u_{\rm{T}}}^{\rm{d}}$ of the system (\ref{eq11}) are given as
\begin{equation}\label{eq12}
\bA_{u_{\rm{T}}}^{\rm{d}}=
\begin{pmatrix}
1 & \bzero & \cdots & \bzero \\
\bzero & \bA_{u_{\rm{T}},\rm{1}}^{\rm{d}} & \ddots & \vdots \\
\vdots & \ddots & \ddots & \bzero \\
\bzero & \cdots & \bzero & \bA_{u_{\rm{T}},n_{\rm{a}}-1}^{\rm{d}} \\
\end{pmatrix}
\end{equation}
and
\begin{equation}\label{eq13}
\bC_{u_{\rm{T}}}^{\rm{d}}=
\begin{pmatrix}
1 & \bC_{u_{\rm{T}},\rm{1}}^{\rm{d}} & \cdots & \bC_{u_{\rm{T}},n_{\rm{a}}-1}^{\rm{d}} \\
\end{pmatrix}.
\end{equation}
The individual block entries in these block matrices can be represented as
\begin{equation}\label{eq14}
\bA_{u_{\rm{T}},i}^{\rm{d}}=
\begin{pmatrix}
0 & -1 \\
1 & 2\cos(i2\pi f_{\rm{T}}T_{\rm{s}}) \\
\end{pmatrix}
\end{equation}
and
\begin{equation}\label{eq15}
\bC_{u_{\rm{T}},i}^{\rm{d}}=
\begin{pmatrix}
1 & 0 \\
\end{pmatrix},
\end{equation}
where $T_{\rm{s}}$ denotes the sampling time.

It should be noted that the matrix $\bA_{u_{\rm{T}}}^{\rm{d}}$ is not stable which contains the eigenvalue $1$.

\subsection{Augmentation}
If the input signal $\bu(t)$ of the system (\ref{eq1}) is a band-limited periodic signal, then according to the continuous-time modeling of a periodic signal described in Subsection \ref{ctpm}, we can obtain a state-space model of the signal $\bu(t)$ as

\begin{equation}\label{eq16}
    \begin{dcases}
        \dot\bx_{\bu}(t)=\bA_{\bu}\bx_{\bu}(t), \\
        \bu(t)=\bC_{\bu}\bx_{\bu}(t)+\be_{\rm o}(t), \\
    \end{dcases}
\end{equation}
where
\[\bu(t)=
\begin{pmatrix}
u_1(t) \\
u_2(t) \\
\vdots \\
u_r(t) \\
\end{pmatrix}\in\mathbb{R}^r,
\bx_{\bu}=
\begin{pmatrix}
\bx_{u_{\rm{1}}}(t) \\
\bx_{u_{\rm{2}}}(t) \\
\vdots \\
\bx_{u_r}(t) \\
\end{pmatrix}\in\mathbb{R}^{n_{\bu}},
\bA_{\bu}=
\begin{pmatrix}
\bA_{u_{\rm1}} & \bzero & \cdots & \bzero \\
\bzero & \bA_{u_{\rm2}} & \ddots & \vdots \\
\vdots & \ddots & \ddots & \bzero \\
\bzero & \cdots & \bzero & \bA_{u_r} \\
\end{pmatrix},\]
\[\bC_{\bu}=
\begin{pmatrix}
\bC_{u_{\rm1}} & \bC_{u_{\rm1}} & \cdots & \bC_{u_r} \\
\end{pmatrix},\]
and the individual block entries of $\bA_{\bu}$ and $\bC_{\bu}$ can be denoted as the forms in (\ref{eq7}) and (\ref{eq8}), respectively. $\be_{\rm o}(t)$ could be colored noise or white noise.

By augmenting the state-space model of the system (\ref{eq1}) with the continuous-time periodic signal model (\ref{eq16}), we can have

\begin{equation}\label{eq17}
    \begin{dcases}
        \dot\bx_{\rm a}(t)=\bA_{\rm a}\bx_{\rm a}(t)+\bK_{\rm a}\be_{\rm o}(t), \\
        \by(t)=\bC_{\rm a}\bx_{\rm a}(t)+\bV_{\rm a}\be_{\rm o}(t)+\be_{\rm m}(t), \\
    \end{dcases}
\end{equation}
where
\[\by(t)=
\begin{pmatrix}
y_1(t) \\
y_2(t) \\
\vdots \\
y_m(t) \\
\end{pmatrix},
\bx_{\rm a}(t)=
\begin{pmatrix}
\bx_{\rm s}(t) \\
\bx_{\bu}(t) \\
\end{pmatrix},
\bA_{\rm a}=
\begin{pmatrix}
\bA_{\rm s} & \bB_{\rm s}\bC_{\bu} \\
\bzero & \bA_{\bu} \\
\end{pmatrix},
\bC_{\rm a}=
\begin{pmatrix}
\bC_{\rm s} & \bD_{\rm s}\bC_{\bu} \\
\end{pmatrix},\]
and
\[\bK_{\rm a}=
\begin{pmatrix}
\bB_{\rm s}\\
\bzero
\end{pmatrix},
\bV_{\rm a}=
\bD_{\rm s}.\]

Analogously, the augmented model in discrete time can be expressed as
\begin{equation}\label{eq18}
    \begin{dcases}
        \bx_{\rm a}(k+1)=\bA_{\rm a}^{\rm d}\bx_{\rm a}(k)+\bK_{\rm a}\be_{\rm o}(k), \\
        \by(k)=\bC_{\rm a}^{\rm d}\bx_{\rm a}(k)+\bV_{\rm a}\be_{\rm o}(k)+\be_{\rm m}(k), \\
    \end{dcases}
\end{equation}
where
\[\by(k)=
\begin{pmatrix}
y_1(k) \\
y_2(k) \\
\vdots \\
y_m(k) \\
\end{pmatrix},
\bx_{\rm a}(k)=
\begin{pmatrix}
\bx_{\rm s}(k) \\
\bx_{\bu}(k) \\
\end{pmatrix},
\bA_{\rm a}^{\rm d}=
\begin{pmatrix}
\bA_{\rm s}^{\rm d} & \bB_{\rm s}^{\rm d}\bC_{\bu}^{\rm d} \\
\bzero & \bA_{\bu}^{\rm d} \\
\end{pmatrix},
\bC_{\rm a}^{\rm d}=
\begin{pmatrix}
\bC_{\rm s}^{\rm d} & \bD_{\rm s}^{\rm d}\bC_{\bu}^{\rm d} \\
\end{pmatrix},\]
and
\[\bK_{\rm a}=
\begin{pmatrix}
\bB_{\rm s}^{\rm d}\\
\bzero
\end{pmatrix},
\bV_{\rm a}=
\bD_{\rm s}^{\rm d},\]
and the matrices $\bA_{\rm s}^{\rm d}$, $\bB_{\rm s}^{\rm d}$, $\bC_{\rm s}^{\rm d}$, and $\bD_{\rm s}^{\rm d}$ are discrete-time versions of the matrices $\bA_{\rm s}$, $\bB_{\rm s}$, $\bC_{\rm s}$, and $\bD_{\rm s}$, respectively. The matrices $\bA_{\bu}^{\rm d}$ and $\bC_{\bu}^{\rm d}$ can be formulated by using the discrete-time periodic signal modeling introduced in Subsection \ref{dtpm}. The individual block entries of $\bA_{\bu}$ and $\bC_{\bu}$ can be denoted as the forms in (\ref{eq12}) and (\ref{eq13}), respectively.

\section{Blind identification}\label{s4}
In this section, a novel blind identification method is derived. Specifically, a periodic signal-based system identification (PSSID) method is first proposed, and then based on which the blind identification is proposed.

\subsection{Periodic signal-based system identification}\label{PSSID}

\begin{figure}[ht]
\centering
\includegraphics[scale=0.5,angle=0]{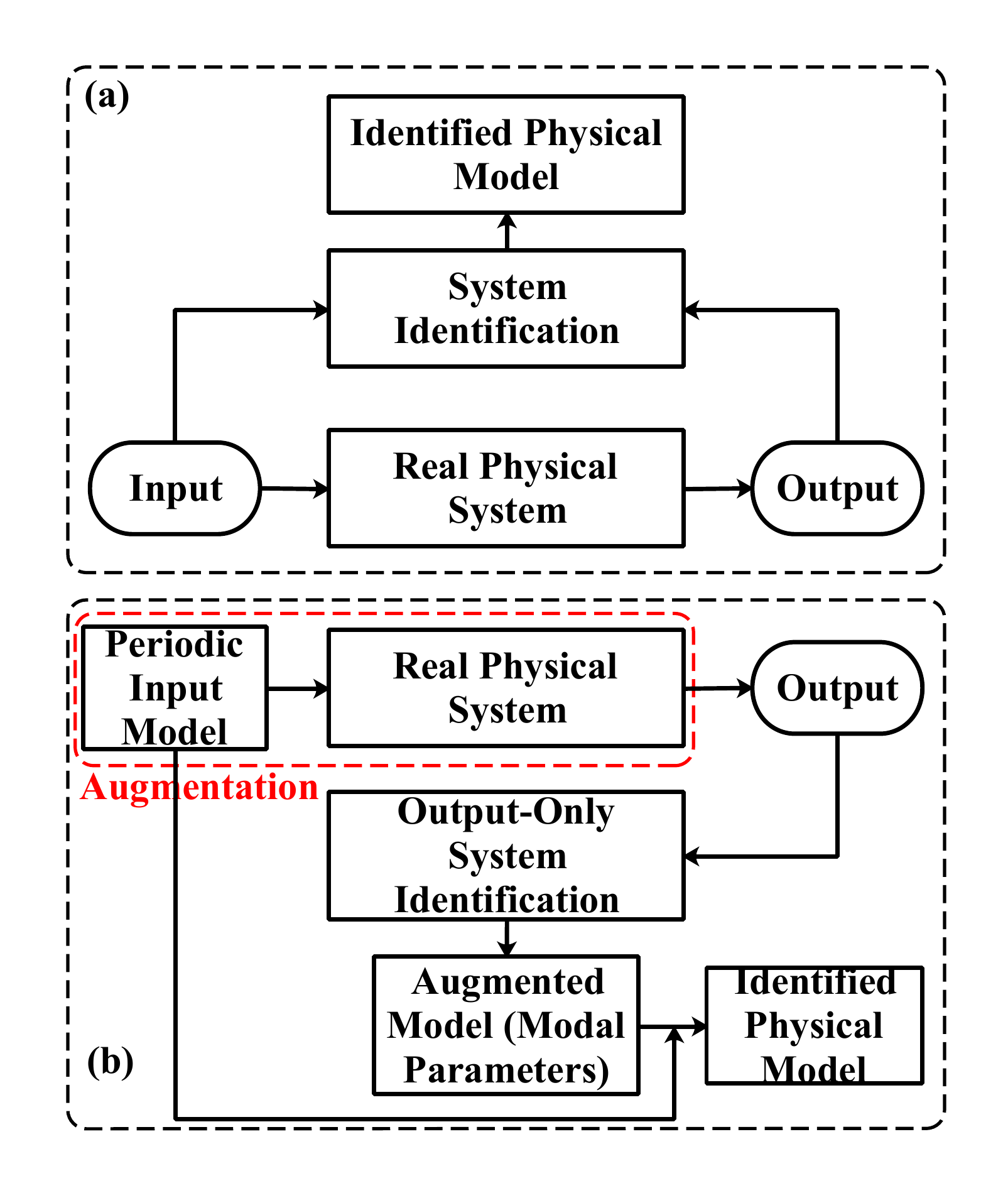}
\caption{Block diagrams for (a) standard system identification and (b) PSSID.}
\label{fig1}
\end{figure}

Different with standard system identification shown in Figure \ref{fig1}, based on using a band-limited periodic signal as the input excitation the proposed PSSID method shown in Figure \ref{fig1} first represents the periodic input as the output of an autonomous state-space model, and then augments the real physical system (\ref{eq1}) with the input signal model, and then uses the purely stochastic subspace identification method (see, e.g., \cite{Overschee1996}) to identify a modal, and finally based on a transformation the matrices $\bA_{\rm{s}}$, $\bC_{\rm{s}}$, $\bD_{\rm{s}}$, and $\bB_{\rm{s}}\bu(t)$ in the system (\ref{eq1}) can be obtained, afterwards using the input data, $\bB_{\rm{s}}$ can be estimated. Moreover, it is worth notice that with knowing only the frequency components of the periodic input, based on the known output signal this novel identification method can extract the eigenvalues of the system (\ref{eq1}), and then according to these eigenvalues the modal parameters including natural frequencies and structural damping ratio of the physical system (\ref{eq1}) can be obtained.

Before illustrating the specific PSSID method, the concept of persistent excitation is first introduced as follows.

\begin{definition}\label{definitiona}
\cite{Verhaegen2007} An ergodic sequence $\bu(k)\in\mathbb{R}^m$ is persistently exciting of order $i$ from time instant $j$ on, if and only if:
\begin{equation*}
\rho(\frac{1}{N}\bU_{j,i,N}\bU_{j,i,N}^{\rm T})=mi, 
\end{equation*}
where the symbol ``\; $^{\rm T}$\; '' denote the matrix transpose, $\rho(.)$ denotes the rank of a square matrix, and
\[\bU_{j,i,N}=
\begin{pmatrix}
\bu(j) & \bu(j+1) & \cdots & \bu(j+N-1) \\
\bu(j+1) & \bu(j+2) & \cdots & \bu(j+N) \\
\vdots & \vdots & \ddots & \vdots \\
\bu(j+i-1) & \bu(j+i) & \cdots & \bu(j+N+i-2) \\
\end{pmatrix}.\]
\end{definition}

The developed PSSID method contains three steps:
\begin{enumerate}[{(1)}]
\item
Use the stochastic subspace identification method to identify the augmented model (\ref{eq18}), and then transform it into its continuous-time version.
\item
Based on the continuous-time model, a real Jordan form of the augmented model is derived, and then eigenvalues of the matrix $\bA_{\rm{a}}$ can be calculated directly, and as a result the modal parameters including natural frequencies and structural damping ratio of the physical system (\ref{eq1}). Afterwards, a canonical form of the identified augmented model can be obtained based on its real Jordan form. Meanwhile, the real Jordan forms of both input signal model (\ref{eq6}) and physical model (\ref{eq1}) can be derived, and then combine them to obtain a theoretical augmented model in canonical form. The definitions of canonical form and Jordan form are introduced in Section \ref{PSSID2}.
\item
By making a comparison between the identified canonical-form model and the theoretical canonical-form model, the physical parameters are uniquely obtained, i.e., a global solution of the system identification can be obtained.
\end{enumerate}

\subsubsection{Step I-Augmented model identification}\label{PSSID1}
For the model (\ref{eq18}), we can transform it into the following form (for how to transform, see the content in Chapter $9.6$ of \cite{Verhaegen2007})

\begin{equation}\label{eq19}
    \begin{dcases}
        \bx_{\rm a}(k+1)=\bA_{\rm a}^{\rm d}\bx_{\rm a}(k)+\bomega(k), \\
        \by(k)=\bC_{\rm a}^{\rm d}\bx_{\rm a}(k)+\bupsilon(k), \\
    \end{dcases}
\end{equation}
where $\bomega(k)$ and $\bupsilon(k)$ denote the process noise and the measurement noise, respectively. $\bomega(k)$ and $\bupsilon(k)$ are zero-mean white noise with covariance matrix
\begin{equation*}
\mathbf{E}(
\begin{pmatrix}
\bomega_p \\
\bupsilon_p \\
\end{pmatrix}
\begin{pmatrix}
\bomega_q^{\rm T} & \bupsilon_q^{\rm T} \\
\end{pmatrix})
=
\begin{pmatrix}
\bQ & \bS \\
\bS^{\rm T} & \bR \\
\end{pmatrix}
\delta_{pq} > \bzero\,,
\end{equation*}
where $\mathbf{E}(\cdot)$ denotes the expected value of a random variable and
$\delta_{pq}=
\begin{dcases}
0, p\neq q, \\
1, p = q. \\
\end{dcases}$

According to the persistent excitation condition in Definition \ref{definitiona}, $\bomega(k)$ (white noise) can excite the model (\ref{eq19}) sufficiently \cite{Verhaegen2007}.

Several assumptions are made for the model (\ref{eq19}).

\begin{assumption}
$\bA_{\rm s}^{\rm d}$ is a stable matrix.
\end{assumption}

\begin{assumption}
The matrix pair $(\bA_{\rm a}^{\rm d}, \bC_{\rm a}^{\rm d})$ is assumed to be observable, which implies that all modes in the system can be observed in the output $\by(k)$ and can thus be identified. The matrix pair $(\bA_{\rm a}^{\rm d}, \bQ^{\frac{1}{2}})$ is assumed to be controllable, which in turn implies that all modes of the system are excited by the stochastic input $\bomega(k)$.
\end{assumption}

\begin{assumption}
The process noise $\bomega(k)$ and the measurement noise $\bupsilon(k)$ are not identically zero.
\end{assumption}

\begin{assumption}
$\bx_{\rm a}(k)$ is independent of $\bomega(k)$ and $\bupsilon(k)$.
\end{assumption}

Since the signal $\bu(t)$ is periodical, and $\bomega(k)$ and $\bupsilon(k)$ are white noise, the stochastic process is stationary, and we can have
\begin{equation*}
    \begin{dcases}
        \mathbf{E}(\bx_{\rm a}(k))=\bC_1, \\
        \mathbf{E}((\bx_{\rm a}(k)-\mathbf{E}(\bx_{\rm a}(k)))(\bx_{\rm a}(k)-\mathbf{E}(\bx_{\rm a}(k)))^{\rm T})=\bC_2, \\
    \end{dcases}
\end{equation*}
where $\bC_1$ and $\bC_2$ are constant matrices.

We transform the model (\ref{eq19}) into the following dynamical system with different coordinates
\begin{equation}\label{eq19a}
    \begin{dcases}
        \bx_{\rm a}^*(k+1)=\bA_{\rm a}^{\rm d}\bx_{\rm a}^*(k)+\bomega(k), \\
        \by(k)=\bC_{\rm a}^{\rm d}\bx_{\rm a}^*(k)+\bupsilon(k)+\bd, \\
    \end{dcases}
\end{equation}
where $\bd$ denotes a constant vector, and $\bd=\bC_1$, and $\bx_{\rm a}^*(k)=\bx_{\rm a}(k)-\bC_1$.

Based on (\ref{eq19}) and (\ref{eq19a}), we can obtain
\begin{align*}
&\mathbf{E}(\bx_{\rm a}^*(k))=\bzero, \\
&\bSigma\triangleq\mathbf{E}(\bx_{\rm a}^*(k)(\bx_{\rm a}^*(k))^{\rm T})=\bA_{\rm a}^{\rm d}\bSigma(\bA_{\rm a}^{\rm d})^{\rm T}+\bQ, \\
&\bLambda\triangleq\mathbf{E}(\by(k)\by^{\rm T}(k))=\bC_{\rm a}^{\rm d}\bSigma(\bC_{\rm a}^{\rm d})^{\rm T}+\bR+\bd\bd^{\rm T}=\bC_{\rm a}^{\rm d}\bSigma(\bC_{\rm a}^{\rm d})^{\rm T}+\bR^*, \\
&\bG\triangleq\mathbf{E}(\bx_{\rm a}^*(k+1)\by^{\rm T}(k))=\bA_{\rm a}^{\rm d}\bSigma(\bC_{\rm a}^{\rm d})^{\rm T}+\bS,
\end{align*}
where $\bR+\bd\bd^{\rm T}=\bR^*$.

Based on the above assumptions and results, a conventional stochastic subspace identification method (see ``stochastic algorithm $3$'' in page $90$ of \cite{Overschee1996} which can always identify a positive real covariance sequence) can be implemented to identify $\bA_{\rm a}^{\rm d}$ and $\bC_{\rm a}^{\rm d}$. (We denote the estimates of $\bA_{\rm a}^{\rm d}$ and $\bC_{\rm a}^{\rm d}$ as $\hat{\bA}_{\rm a}^{\rm d}$ and $\hat{\bC}_{\rm a}^{\rm d}$, respectively. The symbol ``\; $\hat{}$\; " denotes the estimation value.) The complete procedure of the ``stochastic algorithm $3$'' goes through several stages:
\begin{enumerate}[{(1)}]
\item By applying a Kalman filter to the stochastic model (\ref{eq19a}), of which the innovation model can be obtained as
\begin{equation}\label{eq19b}
    \begin{dcases}
        \bx_{\rm a}^{\rm f}(k+1)=\bA_{\rm a}^{\rm d}\bx_{\rm a}^{\rm f}(k)+\bK^{\rm f}\be_{\rm a}^{\rm f}(k), \\
        \by(k)=\bC_{\rm a}^{\rm d}\bx_{\rm a}^{\rm f}(k)+\be_{\rm a}^{\rm f}(k), \\
    \end{dcases}
\end{equation}
where
\begin{equation*}
\bP\triangleq\mathbf{E}(\bx_{\rm a}^{\rm f}(k)(\bx_{\rm a}^{\rm f}(k))^{\rm T})
\end{equation*}
and
\begin{equation*}
\mathbf{E}(\be_{\rm a}^{\rm f}(k)(\be_{\rm a}^{\rm f}(k)))=\bLambda-\bC_{\rm a}^{\rm d}\bP(\bC_{\rm a}^{\rm d})^{\rm T}.
\end{equation*}

Additionally, based on (\ref{eq19a}) and (\ref{eq19b}), we can obtain the following expressions of forward (algebraic) Riccati eqution and steady-state Kalman filter gain $\bK^{\rm f}$ (see page $62$ of \cite{Overschee1996})
\begin{equation}\label{eq19c}
\bP=\bA_{\rm a}^{\rm d}\bP(\bA_{\rm a}^{\rm d})^{\rm T}+(\bG-\bA_{\rm a}^{\rm d}\bP(\bC_{\rm a}^{\rm d})^{\rm T}(\bLambda-\bC_{\rm a}^{\rm d}\bP(\bC_{\rm a}^{\rm d})^{\rm T})^{-1}(\bG-\bA_{\rm a}^{\rm d}\bP(\bC_{\rm a}^{\rm d})^{\rm T})^{\rm T}
\end{equation}
and
\begin{equation}\label{eq19d}
\bK^{\rm f}=(\bG-\bA_{\rm a}^{\rm d}\bP(\bC_{\rm a}^{\rm d})^{\rm T})(\bLambda-\bC_{\rm a}^{\rm d}\bP(\bC_{\rm a}^{\rm d})^{\rm T})^{-1}.
\end{equation}

\item Calculate the projections:
\begin{equation*}
\mathcal{O}_i=\bY_f/\bY_p
\end{equation*}
and
\begin{equation*}
\mathcal{O}_{i-1}=\bY_f^-/\bY_p^+,
\end{equation*}
where
\[
\bY_p
=
\begin{pmatrix}
\by(0) & \by(1) & \cdots & \by(j-1) \\
\vdots & \vdots & \ddots & \vdots \\
\by(i-2) & \by(i-1) & \cdots & \by(i+j-3) \\
\by(i-1) & \by(i) & \cdots & \by(i+j-2)
\end{pmatrix},
\bY_f
=
\begin{pmatrix}
\by(i) & \by(i+1) & \cdots & \by(i+j-1) \\
\vdots & \vdots & \ddots & \vdots \\
\by(2i-2) & \by(2i-1) & \cdots & \by(2i+j-3) \\
\by(2i-1) & \by(2i) & \cdots & \by(2i+j-2)
\end{pmatrix}
\]
and
\begin{equation*}
\bY_p^+
=
\begin{pmatrix}
\by(0) & \by(1) & \cdots & \by(j-1) \\
\vdots & \vdots & \ddots & \vdots \\
\by(i-2) & \by(i-1) & \cdots & \by(i+j-3) \\
\by(i-1) & \by(i) & \cdots & \by(i+j-2) \\
\by(i) & \by(i+1) & \cdots & \by(i+j-1)
\end{pmatrix},
\end{equation*}
\begin{equation*}
\bY_f^-
=
\begin{pmatrix}
\by(i+1) & \by(i+2) & \cdots & \by(i+j) \\
\vdots & \vdots & \ddots & \vdots \\
\by(2i-2) & \by(2i-1) & \cdots & \by(2i+j-3) \\
\by(2i-1) & \by(2i) & \cdots & \by(2i+j-2)
\end{pmatrix}.
\end{equation*}

\item Calculate the SVD of the weighted projection:
\begin{equation*}
\bW_1\mathcal{O}_i\bW_2=\bU\bS\bV^{\rm T},
\end{equation*}
where
\[\bU=
\begin{pmatrix}
  \bU_1 & \bU2
\end{pmatrix},
\bS=
\begin{pmatrix}
  \bS_1 & \bzero \\
  \bzero & \bzero \\
\end{pmatrix},
\bV^{\rm T}=
\begin{pmatrix}
  \bV_1^{\rm T} \\
  \bV_2^{\rm T}
\end{pmatrix}.\]

Based on the matrix $\bS_1$, we can determine the model order of (\ref{eq19a}).

\begin{remark}
Because the order of the model (\ref{eq16}) is known, we only use the SVD technique to determine the order of the system (\ref{eq19a}).
\end{remark}

\item Calculate the extended observability matrices:\\
The transformation from the discrete-time system (\ref{eq19a}) to its continuous-time version can be done as follows
\begin{equation}\label{eq20}
\bA_{\rm a}=\frac{1}{T_{\rm s}}\ln(\bA_{\rm a}^{\rm d})
\end{equation}
and
\begin{equation}\label{eq21}
\bC_{\rm a}=\bC_{\rm a}^{\rm d}.
\end{equation}

Here it should be noted that the zero-order hold method cannot handle discrete-time systems with poles which are zero, in addition, the zero-order hold increases the model order for discrete systems with negative real poles \cite{Franklin1997}.

With the identified matrices $\hat{\bA}_{\rm a}^{\rm d}$ and $\hat{\bC}_{\rm a}^{\rm d}$, and Equations (\ref{eq20}) and (\ref{eq21}), we can obtain $\hat{\bA}_{\rm a}$ and $\hat{\bC}_{\rm a}$. The reason why we need the continuous-time version of the model (\ref{eq19a}) is that a physical model needs to be derived. For the calculation of the initial state value $\bx_{\rm a}^*(0)$ or the estimation of the state $\bx_{\rm a}^*(k)$ ($k>0$), we can use the Kalman filter for the estimated augmented model and the output signal $\by(k)$. Since the augmented model (\ref{eq19a}) without noise is a linear output-only model, the system transition and measurement sensitivity matrices $(\bA_{\rm a}, \bC_{\rm a})$ can be used to represent the whole augmented model (\ref{eq19a}) without noise effect, and therefore hereafter, we refer to different forms of linear output-only model based on these two matrices.

\begin{remark}
If there is a periodic input signal with offset used for the identification, the matrix $\bA_{\bu}^{\rm d}$ will be an unstable matrix. This is obvious since the offset corresponds to the number ``1" in the matrix $\bA_{\bu}^{\rm d}$.
\end{remark}

\begin{remark}
Because of the augmentation, we need to be careful about the observability of the model (\ref{eq19}), the relationship between the observability of the system (\ref{eq19}) and the frequency components of the signal model can be found in \ref{appendixb}. The relationship between the input number and output number is discussed therein.
\end{remark}
\end{enumerate}

\subsubsection{Step II-Real Jordan form and canonical form}\label{PSSID2}
In this section, one definition related to canonical form is first given as follows.
\begin{definition}\label{definitionb}
\cite{Wang1976} Let $X$ be a set, and let $E$ be an equivalence relation on $X$. A map $\phi:X \rightarrow X$ is said to be a canonical map for $E$ on $X$ if \\
i) $xE\phi(x),\forall\,x \in X$, \\
ii) $xEy\Leftrightarrow\phi(x)=\phi(y),\forall\,x,y\in X$. \\
The image of $\phi$, denoted by $\mathrm{Im}\,\phi$, is said to be a set of canonical forms for $E$ on $X$. It is also said to be a set of $E$-canonical forms on $X$.
\end{definition}

According to the above definition, we can conclude that a square matrix can have several kinds of canonical forms, such as real Jordan (canonical) form and the canonical form obtained by transforming the real Jordan form. In this paper, we call the latter one canonical form to make a difference (even though the Jordan form is also a kind of canonical form), and in the following this form is introduced once more. Furthermore, the specific description of the real Jordan form can be found in \ref{appendixa}.

The system transition matrices of both input signal model (\ref{eq16}) and physical model (\ref{eq3}) can be respectively decomposed into their real Jordan forms as follows
\begin{equation}\label{eq22}
\bA_{\bu}=\bT_{\bu}\bA_{\bu}^{\rm J}\bT_{\bu}^{-1},
\end{equation}
and
\begin{equation}\label{eq23}
\bA_{\rm s}=\bT_{\rm s}\bA_{\rm s}^{\rm J}\bT_{\rm s}^{-1},
\end{equation}
where $\bT_{\bu}$ and $\bT_{\rm s}$ are transformation matrices, and according to the model structure of the signal model (\ref{eq16}) the matrix $\bT_{\bu}$ is a identity matrix, i.e., $\bA_{\bu}=\bA_{\bu}^{\rm J}$.

Based on (\ref{eq22}) and (\ref{eq23}), both input signal model (\ref{eq16}) and system (\ref{eq3}) without noise effect can be realized in their real Jordan forms as follows
\begin{equation}\label{eq24}
    \begin{dcases}
        \dot\beeta_{\bu}(t)=\bA_{\bu}^{\rm J}\beeta_{\bu}(t), \\
        \bu(t)=\bC_{\bu}^{\rm J}\beeta_{\bu}(t), \\
    \end{dcases}
\end{equation}
and

\begin{equation}\label{eq25}
    \begin{dcases}
        \dot\beeta_{\rm s}(t)=\bA_{\rm s}^{\rm J}\beeta_{\rm s}(t)+\bB_{\rm s}^{\rm J}\bu(t), \\
        \by(t)=\bC_{\rm s}^{\rm J}\beeta_{\rm s}(t)+\bD_{\rm s}^{\rm J}\bu(t), \\
    \end{dcases}
\end{equation}
where $\bx_{\bu}(t)=\bT_{\bu}\beeta_{\bu}(t)$, $\bx_{\rm s}(t)=\bT_{\rm s}\beeta_{\rm s}(t)$, $\bC_{\bu}^{\rm J}=\bC_{\bu}\bT_{\bu}$, $\bB_{\rm s}^{\rm J}=\bT_{\rm s}^{-1}\bB_{\rm s}$, $\bC_{\rm s}^{\rm J}=\bC_{\rm s}\bT_{\rm s}$, and $\bD_{\rm s}^{\rm J}=\bD_{\rm s}$.

It should be noted that in the models (\ref{eq24}) and (\ref{eq25}), we delete the noise terms in order to reduce the complexity for expressing the formulas, and the deletion does not affect the transformations of the system matrices. We still use ``$=$'' instead of ``$\simeq$'' in (\ref{eq24}) and (\ref{eq25}) even though the approximation notation ``$\simeq$'' is the right one.

Based on (\ref{eq24}) and (\ref{eq25}), an augmented model in theory (We call this model theoretical model.) can be obtained as follows
\begin{equation}\label{eq26}
    \begin{dcases}
        \dot\beeta_{\rm a}(t)=\bA_{\rm a}^{\rm J}\beeta_{\rm a}(t), \\
        \by(t)=\bC_{\rm a}^{\rm J}\beeta_{\rm a}(t), \\
    \end{dcases}
\end{equation}
where $\beeta_{\rm a}(t)=
\begin{pmatrix}
\beeta_{\rm s}(t) \\
\beeta_{\bu}(t) \\
\end{pmatrix}$, $\bA_{\rm a}^{\rm J}=
\begin{pmatrix}
\bA_{\rm s}^{\rm J} & \bB_{\rm s}^{\rm J}\bC_{\bu}^{\rm J} \\
\bzero & \bA_{\bu}^{\rm J} \\
\end{pmatrix}$, and
$\bC_{\rm a}^{\rm J}=
\begin{pmatrix}
\bC_{\rm s}^{\rm J} & \bD_{\rm s}^{\rm J}\bC_{\bu}^{\rm J} \\
\end{pmatrix}$.

Equations (\ref{eq24}) and (\ref{eq25}) represent real Jordan realizations for the input signal model (\ref{eq16}) and the system (\ref{eq3}) in theory, and by combing these two real Jordan realization, we can obtain an augmented model (\ref{eq26}) in another canonical form, as aforementioned, this canonical form can be obtained by transforming the Jordan form of the augmented model. As mentioned in the start of Section \ref{PSSID}, the theoretical augmented model (\ref{eq26}) is needed to make comparison with the identified augmented model in canonical form, in the following, the derivation of the canonical form of the identified augmented model is illustrated, and the comparison is made in Step III.

After the identification of the system (\ref{eq19a}), we can transform the identified matrix $\hat\bA_{\rm a}$ into its real Jordan form $\hat\bA_{\rm J}$ using a transformation matrix $\bT_{\rm a}$, that is
\begin{equation}\label{eq27}
\hat\bA_{\rm a}=\bT_{\rm a}\hat\bA_{\rm J}\bT_{\rm a}^{-1}.
\end{equation}
The matrix $\hat\bA_{\rm J}$ is a block diagonal matrix which can de expressed as follows
\begin{equation}
\hat\bA_{\rm J}=
    \begin{pmatrix}\label{eq28}
        \bJ_{\bA}^{(1)} & \bzero & \cdots & \bzero \\
        \bzero & \bJ_{\bA}^{(2)} & \cdots & \bzero \\
        \vdots & \vdots & \ddots & \vdots \\
        \bzero & \bzero & \cdots & \bJ_{\bA}^{(N)} \\
    \end{pmatrix},
\end{equation}
where $\bJ_{\bA}^{(k)}$ corresponds to a single eigenfrequency.

Different real Jordan decompositions of $\hat\bA_{\rm a}$ are the order variations of the Jordan blocks ($\bJ_{\bA}^{(k)},k=1, 2, \ldots, N$) in the matrix $\hat\bA_{\rm J}$. Since the periodic input excitation frequency components are known prior, it can be easily to make a difference between the identified frequency components of the input and the identified frequency components of the system. Sometimes, if a case that some identified frequency components of the input are the same as or nearly the same as some identified frequency components of the system is encountered, we can change the frequency components of the input excitation. Hence it is unnecessary to make an assumption that there are no frequency components overlaps between the signal model and the physical model.

Meanwhile, the matrix $\hat\bC_{\rm a}$ and the estimated initial state $\hat\bx_{\rm a}^*(0)$ should also be transformed as follows
\begin{equation}\label{eq29}
\hat\bC_{\rm J}=\hat\bC_{\rm a}\bT_{\rm a},
\end{equation}
and
\begin{equation}\label{eq30}
\hat\bx_{\rm J}(0)=\bT_{\rm a}^{-1}\hat\bx_{\rm a}^*(0),
\end{equation}
where $\hat\bC_{\rm J}$ and $\hat\bx_{\rm a}^*(0)$ together with $\hat\bA_{\rm J}$ correspond to the real Jordan form of the identified model of the system (\ref{eq19a}).

Furthermore, we can also make a separation of the frequency components of the input signal model (\ref{eq16}) and the physical model (\ref{eq3}) since the frequency components of the input model are known, as a result by using a transformation matrix we can obtain a new form of the matrix $\hat\bA_{\rm J}$ as follow
\begin{equation}\label{eq31}
\bar\bA_{\rm a}^{\rm J}=\bT_{\rm J}\hat\bA_{\rm J}\bT_{\rm J}^{-1}=
\begin{pmatrix}
\hat\bJ_{\rm s} & \bzero \\
\bzero & \hat\bJ_{\bu} \\
\end{pmatrix},
\end{equation}
where $\hat\bJ_{\rm s}$ contains the Jordan blocks corresponding to the physical system while $\hat\bJ_{\bu}$ contains the Jordan blocks corresponding to the input signal, and $\hat\bJ_{\rm s}$ and $\hat\bJ_{\bu}$ can be seen the identified values of $\bA_{\rm s}^{\rm J}$ and $\bA_{\bu}^{\rm J}$. The matrix $\bT_{\rm J}$ denotes the transformation matrix.

Accordingly, the transformation for $\hat\bC_{\rm a}^{\rm J}$ and the initial estimated state $\hat\bx_{\rm a}^{\rm J}(0)$ are
\begin{equation}\label{eq32}
\bar\bC_{\rm a}^{\rm J}=\hat\bC_{\rm J}\bT_{\rm J}^{-1}=
\begin{pmatrix}
\hat\bC_{\rm s}^{\rm J} & \hat\bC_{\bu}^{\rm J} \\
\end{pmatrix},
\end{equation}
and
\begin{equation}\label{eq33}
\bar\bx_{\rm a}^{\rm J}(0)=\bT_{\rm J}\hat\bx_{\rm J}(0)=
\begin{pmatrix}
\bar\bx_{\rm s}^{\rm J}(0) \\
\bar\bx_{\bu}^{\rm J}(0) \\
\end{pmatrix}.
\end{equation}

Based on the above analysis, the estimated modal parameters including natural frequencies, natural modes and structural damping ratios of the physical system can be obtained directly from $\hat\bJ_{\rm s}$. Denote the $i$-th natural frequency of the physical system (\ref{eq1}) as $f_{{\rm nat},i}$, denote the $i$-th natural mode of the physical system (\ref{eq1}) as $\bq_i$, and denote the $i$-th damping ratio of the physical system $(\ref{eq1})$ as $\zeta_i$, the natural modes can be calculated directly using the eigenvalues, and according to \cite{Shi2016}, $f_{{\rm nat},i}$ and $\zeta_i$ can be calculated using the following formulas
\begin{equation}\label{eq34}
f_{{\rm nat},i}=\sqrt{(\sigma_i^{\rm s})^2+(\omega_i^{\rm s})^2},
\end{equation}
and
\begin{equation}\label{eq35}
\zeta_i=\frac{\left|\sigma_i^{\rm s}\right|}{\sqrt{(\sigma_i^{\rm s})^2+(\omega_i^{\rm s})^2}},
\end{equation}
where $|\cdot|$ denotes the absolute value, and the complex values $\sigma_i^{\rm s}\pm j\omega_i^{\rm s}$ denotes the eigenvalues of the matrix $\bA_{\rm s}^{\rm J}$ in a conjugate pair.

Based on the above results, using Roth's removal rule \cite{Gerrish1998}, we can transform the system realization $(\bar\bA_{\rm a}^{\rm J},\bar\bC_{\rm a}^{\rm J})$ into the realization $(\hat\bA_{\rm a}^{\rm J},\hat\bC_{\rm a}^{\rm J})$ using the transformation matrix $\bT=\begin{pmatrix}\bI & \bX \\ \bzero & \bI \end{pmatrix}$, that is
\begin{equation}\label{eq36}
\hat\bA_{\rm a}^{\rm J}=\bT\bar\bA_{\rm a}^{\rm J}\bT^{-1}=
\begin{pmatrix}
\hat\bJ_{\rm s} & -\hat\bJ_{\rm s}\bX+\bX\hat\bJ_{\bu} \\
\bzero & \hat\bJ_{\bu}
\end{pmatrix}
\end{equation}
and
\begin{equation}\label{eq37}
\hat\bC_{\rm a}^{\rm J}=\bar\bC_{\rm a}^{\rm J}\bT^{-1}=
\begin{pmatrix}
\bar\bC_{\rm s}^{\rm J} & -\bar\bC_{\rm s}^{\rm J}\bX+\bar\bC_{\bu}^{\rm J}
\end{pmatrix},
\end{equation}
where the matrices $\hat\bA_{\rm a}^{\rm J}$ and $\hat\bC_{\rm a}^{\rm J}$ can be respectively seen as the identified matrices $\bA_{\rm a}^{\rm J}$ and $\bC_{\rm a}^{\rm J}$. The uniqueness of the matrix $\bX$ is discussed in the next section.

Accordingly, the estimated initial state can be transformed as follows
\begin{equation}\label{eq38}
\hat\bx_{\rm a}^{\rm J}(0)=\bT\bar\bx_{\rm a}^{\rm J}(0)=
\begin{pmatrix}
\bar\bx_{\rm s}^{\rm J}(0)+\bX\bar\bx_{\bu}^{\rm J}(0) \\
\bar\bx_{\bu}^{\rm J}(0)
\end{pmatrix}.
\end{equation}

\subsubsection{Step III-Identification of the state-space model in physical coordinates}\label{PSSID3}
Based on the theoretical canonical-form augmented model and the identified one obtained in Step II, we can compare these two model, i.e., we let the identified one be equal to the theoretical one, then we can obtain the following equations
\begin{equation}\label{eq39}
\bA_{\rm s}\bT_{\rm s}=\bT_{\rm s}\hat\bJ_{\rm s},
\end{equation}

\begin{equation}\label{eq40}
\bC_{\rm s}\bT_{\rm s}=\hat\bC_{\rm s}^{\rm J},
\end{equation}

\begin{equation}\label{eq41}
\bT_{\rm s}(-\hat\bJ_{\rm s}\bX+\bX\hat\bJ_{\bu})=\bB_{\rm s}\bC_{\bu}\bT_{\bu},
\end{equation}

\begin{equation}\label{eq42}
-\bar\bC_{\rm s}^{\rm J}\bX+\bar\bC_{\bu}^{\rm J}=\bD_{\rm s}\bC_{\bu}\bT_{\bu}.
\end{equation}

Representing $\bT_{\rm s}$, $\hat\bJ_{\rm s}$, and $\bar\bC_{\rm s}^{\rm J}$ in the partitioned forms as follows
\begin{equation}\label{eq43}
\bT_{\rm s}=
\begin{pmatrix}
(\bT_{\rm s})_{11} & (\bT_{\rm s})_{12} \\
(\bT_{\rm s})_{21} & (\bT_{\rm s})_{22} \\
\end{pmatrix},
\end{equation}

\begin{equation}\label{eq44}
\hat\bJ_{\rm s}=
\begin{pmatrix}
(\hat\bJ_{\rm s})_{11} & (\hat\bJ_{\rm s})_{12} \\
(\hat\bJ_{\rm s})_{21} & (\hat\bJ_{\rm s})_{22} \\
\end{pmatrix},
\end{equation}
and
\begin{equation}\label{eq45}
\hat\bC_{\rm s}=
\begin{pmatrix}
(\hat\bC_{\rm s})_{11} & (\hat\bC_{\rm s})_{12} \\
\end{pmatrix}.
\end{equation}

Based on (\ref{eq39}), (\ref{eq40}), (\ref{eq41}), (\ref{eq42}), and (\ref{eq43}), we can obtain
\begin{equation}\label{eq46}
\begin{pmatrix}
\bzero & \bI \\
-\bM_{-1}\bK & -\bM_{-1}\bD \\
\end{pmatrix}
\begin{pmatrix}
(\bT_{\rm s})_{11} & (\bT_{\rm s})_{12} \\
(\bT_{\rm s})_{21} & (\bT_{\rm s})_{22} \\
\end{pmatrix}
=
\begin{pmatrix}
(\bT_{\rm s})_{11} & (\bT_{\rm s})_{12} \\
(\bT_{\rm s})_{21} & (\bT_{\rm s})_{22} \\
\end{pmatrix}
\begin{pmatrix}
(\hat\bJ_{\rm s})_{11} & (\hat\bJ_{\rm s})_{12} \\
(\hat\bJ_{\rm s})_{21} & (\hat\bJ_{\rm s})_{22} \\
\end{pmatrix}
\end{equation}
and
\begin{equation}\label{eq47}
\begin{pmatrix}
\bC_{\rm p}-\bC_{\rm ac}\bM^{-1}\bK & \bC_{\rm v}-\bC_{\rm ac}\bM^{-1}\bD
\end{pmatrix}
\begin{pmatrix}
(\bT_{\rm s})_{11} & (\bT_{\rm s})_{12} \\
(\bT_{\rm s})_{21} & (\bT_{\rm s})_{22} \\
\end{pmatrix}
=
\begin{pmatrix}
(\hat\bC_{\rm s})_{11} & (\hat\bC_{\rm s})_{12}
\end{pmatrix}.
\end{equation}

Expanding (\ref{eq46}), we can obtain
\begin{equation}\label{eq48}
\begin{dcases}
(\bT_{\rm s})_{21}=(\bT_{\rm s})_{11}(\hat\bJ_{\rm s})_{11}+(\bT_{\rm s})_{12}(\hat\bJ_{\rm s})_{21}, \\
(\bT_{\rm s})_{22}=(\bT_{\rm s})_{11}(\hat\bJ_{\rm s})_{12}+(\bT_{\rm s})_{12}(\hat\bJ_{\rm s})_{22}, \\
\end{dcases}
\end{equation}
and
\begin{equation}\label{eq49}
\begin{dcases}
-\bM^{-1}\bK(\bT_{\rm s})_{11}-\bM^{-1}\bD(\bT_{\rm s})_{21}=(\bT_{\rm s})_{21}(\hat\bJ_{\rm s})_{11}+(\bT_{\rm s})_{22}(\hat\bJ_{\rm s})_{21}, \\
-\bM^{-1}\bK(\bT_{\rm s})_{12}-\bM^{-1}\bD(\bT_{\rm s})_{22}=(\bT_{\rm s})_{21}(\hat\bJ_{\rm s})_{12}+(\bT_{\rm s})_{22}(\hat\bJ_{\rm s})_{22}. \\
\end{dcases}
\end{equation}

Also, expanding (\ref{eq47}), we can obtain
\begin{equation}\label{eq50}
\begin{dcases}
\bC_{\rm p}(\bT_{\rm s})_{11}+\bC_{\rm v}(\bT_{\rm s})_{21}+\bC_{\rm ac}(-\bM^{-1}\bK(\bT_{\rm s})_{11}-\bM^{-1}\bD(\bT_{\rm s})_{21})=(\bar\bC_{\rm s}^{\rm J})_{11}, \\
\bC_{\rm p}(\bT_{\rm s})_{12}+\bC_{\rm v}(\bT_{\rm s})_{22}+\bC_{\rm ac}((\hat\bJ_{\rm s})_{12}+(\bT_{\rm s})_{12}(\hat\bJ_{\rm s})_{22})=(\bar\bC_{\rm s}^{\rm J})_{12}. \\
\end{dcases}
\end{equation}

Based on (\ref{eq49}) and (\ref{eq50}), we can obtain
\begin{equation}\label{eq51}
\begin{dcases}
\bC_{\rm p}(\bT_{\rm s})_{11}+\bC_{\rm v}(\bT_{\rm s})_{21}+\bC_{\rm ac}((\bT_{\rm s})_{11}(\hat\bJ_{\rm s})_{11}+(\bT_{\rm s})_{12}(\hat\bJ_{\rm s})_{21})=(\bar\bC_{\rm s}^{\rm J})_{11}, \\
\bC_{\rm p}(\bT_{\rm s})_{12}+\bC_{\rm v}(\bT_{\rm s})_{22}+\bC_{\rm ac}((\bT_{\rm s})_{21}(\hat\bJ_{\rm s})_{12}+(\bT_{\rm s})_{22}(\hat\bJ_{\rm s})_{22})=(\bar\bC_{\rm s}^{\rm J})_{12}. \\
\end{dcases}
\end{equation}

Considering (\ref{eq48})and (\ref{eq51}) together, enough equations are provided to solve for $\bT_{\rm s}$. The solution $\hat\bT_{\rm s}$ can be obtained as follows via the vectorization rule
\begin{equation}\label{eq52}
\hat\bT_{\rm s}=
\begin{pmatrix}
{\rm vec}((\hat\bT_{\rm s})_{11}) \\
{\rm vec}((\hat\bT_{\rm s})_{12}) \\
{\rm vec}((\hat\bT_{\rm s})_{21}) \\
{\rm vec}((\hat\bT_{\rm s})_{22}) \\
\end{pmatrix}=
\bV^\dag\bW,
\end{equation}
where
$$
\begin{tiny}
\bV=
\begin{pmatrix}
(\hat\bT_{\rm s})_{11}^{\mathrm T}\otimes\bI_{n\times n} & (\hat\bT_{\rm s})_{21}^{\rm T}\otimes\bI_{n\times n}
& -\bI_{n^2\times n^2} & \bzero_{n^2\times n^2}\vspace{1ex} \\
(\hat\bT_{\rm s})_{12}^{\mathrm T}\otimes\bI_{n\times n} & (\hat\bT_{\rm s})_{22}^{\rm T}\otimes\bI_{n\times n}
& \bzero_{n^2\times n^2} & -\bI_{n^2\times n^2}\vspace{1ex} \\
\bI_{n\times n}\otimes\bC_{\rm p} & \bzero_{n^2\times n^2}
& \bI_{n\times n}\otimes\bC_{\rm v}+(\hat\bT_{\rm s})_{11}^{\mathrm T}\otimes\bC_{\rm ac} &
(\hat\bT_{\rm s})_{21}^{\mathrm T}\otimes\bC_{\rm ac}\vspace{1ex} \\
\bzero_{n^2\times n^2} & \bI_{n\times n}\otimes\bC_{\rm p} & (\hat\bT_{\rm s})_{12}\otimes\bC_{\rm ac} &
\bI_{n\times n}\otimes\bC_{\rm v}+(\hat\bT_{\rm s})_{22}^{\mathrm T}\otimes\bC_{\rm ac}
\end{pmatrix},
\end{tiny}$$
and
$$\bW=
\begin{pmatrix}
\bzero_{n^2\times1} \\
\bzero_{n^2\times1} \\
{\rm vec}((\bar\bC_{\rm s}^{\rm J})_{11}) \\
{\rm vec}((\bar\bC_{\rm s}^{\rm J})_{12}) \\
\end{pmatrix}.$$
$(\cdot)^\dag$ denotes the Moore-Penrose pseudo-inverse a matrix. ${\rm vec}(\cdot)$ is the vectorization operator that stacks all columns of a matrix from left to right into a single column from top to down. To obtain the solution of (\ref{eq52}), the rule of vectorization, ${\rm vec}(\bA_1\bA_2\bA_3)=(\bA_3^{\rm T}\otimes\bA_1){\rm vec}(\bA_2)$, is applied on (\ref{eq48}) and (\ref{eq51}).

As shown in (\ref{eq52}), for one unique solution $\hat\bT_{\rm s}$, the matrix $\bV$ should be full column rank, and if the matrix $\bV$ is not full column rank, we can still obtain an unique solution $\hat\bT_{\rm s}$ by changing the type or the number of measurement sensors (e.g., the displacement sensor, the velocity sensor, and the acceleration sensor). As known, the acceleration sensor is cheap and widely used in structural engineering, in the following, an example is given based on using only acceleration sensor to guarantee the uniqueness of the solution $\hat\bT_{\rm s}$.

\begin{exmp}
Setting $\bC_{\rm p}=\bzero_{n\times n}$, $\bC_{\rm v}=\bzero_{n\times n}$, and $\bC_{\rm ac}=\bI_{n\times n}$ yields a new matrix $\bV$ denoted as $\bV_1$, that is

\begin{equation}\label{eq53}
\bV_1=
\begin{pmatrix}
(\hat\bT_{\rm s})_{11}^{\mathrm T}\otimes\bI_{n\times n} & (\hat\bT_{\rm s})_{21}^{\rm T}\otimes\bI_{n\times n}
& -\bI_{n^2\times n^2} & \bzero_{n^2\times n^2}\vspace{1ex} \\
(\hat\bT_{\rm s})_{12}^{\mathrm T}\otimes\bI_{n\times n} & (\hat\bT_{\rm s})_{22}^{\rm T}\otimes\bI_{n\times n}
& \bzero_{n^2\times n^2} & -\bI_{n^2\times n^2}\vspace{1ex} \\
\bzero_{n^2\times n^2} & \bzero_{n^2\times n^2}
& (\hat\bT_{\rm s})_{11}^{\mathrm T}\otimes\bC_{\rm ac} &
(\hat\bT_{\rm s})_{21}^{\mathrm T}\otimes\bC_{\rm ac}\vspace{1ex} \\
\bzero_{n^2\times n^2} & \bzero_{n^2\times n^2} & (\hat\bT_{\rm s})_{12}\otimes\bC_{\rm ac} &
(\hat\bT_{\rm s})_{22}^{\mathrm T}\otimes\bC_{\rm ac}
\end{pmatrix},
\end{equation}
Since 
${\mathrm{det}}(\bV_1)\neq 0$, where $\mathrm{det}(\cdot)$ denotes the determinant of a square matrix, the rank of the matrix $\bV_1$ is $4n^2$, and as a result the left inverse of the matrix $\bV_1$ exists, i.e., the solution $\hat\bT_{\rm s}$ can be uniquely obtained. So in practical application when the matrix $\bV$ is not full column rank, the setting in this example can be used.
\end{exmp}

In this paper, as assumed in Assumption \ref{assum2}, we use the setting in this example, i.e., $\bC_{\rm p}=\bzero_{n\times n}$, $\bC_{\rm v}=\bzero_{n\times n}$, and $\bC_{\rm ac}=\bI_{n\times n}$.

Substituting $\bB_{\rm s}=\begin{pmatrix}\bzero \\ \bM^{-1}\bB\end{pmatrix}$ and $\bD_{\rm s}=\bC_{\rm ca}\bM^{-1}\bB$ into (\ref{eq41}) and (\ref{eq42}) results in
\begin{equation}\label{eq54}
\bT_{\rm s}(-\hat\bJ_{\rm s}\bX+\bX\hat\bJ_{\bu})=
\begin{pmatrix}\bzero \\ \bM^{-1}\bB\end{pmatrix}\bC_\bu\bT_\bu
\end{equation}
and
\begin{equation}\label{eq55}
-\bar\bC_{\rm s}^{\rm J}\bX+\bar\bC_{\bu}^{\rm J}=\bC_{\rm ca}\bM^{-1}\bB\bC_\bu\bT_\bu.
\end{equation}

Based on (\ref{eq54}), we can obtain
\begin{equation}\label{eq56}
\bH_1\bT_{\rm s}(-\hat\bJ_{\rm s}\bX+\bX\hat\bJ_{\bu})=\bzero
\end{equation}
and
\begin{equation}\label{eq57}
\bH_2\bT_{\rm s}(-\hat\bJ_{\rm s}\bX+\bX\hat\bJ_{\bu})=\bM^{-1}\bB\bC_\bu\bT_\bu,
\end{equation}
where $\bH_1=\begin{pmatrix}\bI_{n\times n} & \bzero_{n\times n}\end{pmatrix}$ and $\bH_2=\begin{pmatrix}\bzero_{n\times n} & \bI_{n\times n}\end{pmatrix}$.

Substituting $\bM^{-1}\bB\bC_\bu\bT_\bu$ in the right side of (\ref{eq57}) into (\ref{eq55}) yields
\begin{equation}\label{eq58}
\bC_{\rm ca}\bH_2\bT_{\rm s}(-\hat\bJ_{\rm s}\bX+\bX\hat\bJ_{\bu})=-\bar\bC_{\rm s}^{\rm J}\bX+\bar\bC_{\bu}^{\rm J}.
\end{equation}

Enough equations are provided by (\ref{eq56}) and (\ref{eq58}) to solve for the matrix $\bX$. The solution $\hat\bX$ can be obtained as follows via the vectorization rule
\begin{equation}\label{eq59}
{\rm vec}(\hat\bX)=\bF^\dag\bG,
\end{equation}
where \[\bF=\begin{pmatrix} -\bI_{n_\bu\times n_\bu}\otimes(\bH_1\hat\bT_{\rm s}\hat\bJ_{\rm s})+\hat\bJ_{\bu}^{\rm T}\otimes(\bH_1\hat\bT_{\rm s}) \\
-\bI_{n_\bu\times n_\bu}\otimes(-\bC_{\rm ac}\bH_2\hat\bT_{\rm s}\hat\bJ_{\rm s}+\bar\bC_{\rm s}^{\rm J})+\hat\bJ_{\bu}^{\rm T}\otimes(\bC_{\rm ac}\bH_2\hat\bT_{\rm s})\end{pmatrix}\] and
\[\bG=\begin{pmatrix}\bzero_{nr\times 1} \\ {\rm vec}(\bar\bC_{\bu}^{\rm J})\end{pmatrix}.\]

In practical application, sometimes the matrix $\bF$ may not have a left inverse, but we can make it to have a left inverse through changing the frequency components of the input signal, i.e., by changing the value of $\hat\bJ_{\bu}^{\rm T}$ we can make the matrix $\bF$ to be full column rank, as a result the solution $\hat\bX$ can be guaranteed uniquely.

After calculating the values of $\hat\bT_{\rm s}$ and $\hat\bX$, we can obtain the estimates of $\bA_{\rm s}$ and $\bC_{\rm s}$ as follows
\begin{equation}\label{eq60}
\hat\bA_{\rm s}=\hat\bT_{\rm s}\hat\bJ_{\rm s}\hat\bT_{\rm s}^{-1}
\end{equation}
and
\begin{equation}\label{eq61}
\hat\bC_{\rm s}=\bar\bC_{\rm s}^{\rm J}\hat\bT_{\rm s}^{-1}.
\end{equation}
As for the estimates of $\bB_{\rm s}$ and $\bD_{\rm s}$, we need to do the following steps:
\begin{enumerate}[{(1)}]
\item
Obtain the initial state estimate $\hat\bx_{\rm a}^{\rm J}(0)$ in (\ref{eq38}). 
\item
Based on the discrete-time autonomous system $(\hat\bA_{\rm a}^{\rm d},\hat\bC_{\rm a}^{\rm d})$, we can compute the history of the state estimation $\left\{\hat\bx_{\rm a}^*(k),0\leqslant k\leqslant N\right\}$, and then transform it into $\left\{\bar\bx_{\rm a}^{\rm J}(k),0\leqslant k\leqslant N\right\}$ based on using (\ref{eq30}) and (\ref{eq33}). However, it should be noted that since the Kalman filter for the system $(\hat\bA_{\rm a}^{\rm d},\hat\bC_{\rm a}^{\rm d})$ is involved to estimate the state sequence $\left\{\bx_{\rm a}^*(k),0\leqslant k\leqslant N\right\}$, the forepart of the estimation of this sequence can be inaccurate. So in practical application, we can choose some value not from the forepart as the initial state value for the Kalman filter.
\item
Extract the state estimation sequence $\left\{\bar\bx_{\bu}^{\rm J}(k),0\leqslant k\leqslant N\right\}$ of the input signal model from the sequence $\left\{\bar\bx_{\rm a}^{\rm J}(k),0\leqslant k\leqslant N\right\}$.
\item
Define the effective input signal estimate as $\hat\bf_{\rm e}(k)$, and $\hat\bf_{\rm e}(k)=\hat\bT_{\rm s}(-\hat\bJ_{\rm s}\hat\bX+\hat\bX\hat\bJ_{\bu})\bar\bx_{\bu}^{\rm J}(k)$, then based on $\left\{\bar\bx_{\bu}^{\rm J}(k),0\leqslant k\leqslant N\right\}$, we can compute $\left\{\hat\bf_{\rm e}(k),0\leqslant k\leqslant N\right\}$. \\
\item
According to (\ref{eq41}), we can obtain \\
\begin{equation*}\hat\bT_{\rm s}(-\hat\bJ_{\rm s}\hat\bX+\hat\bX\hat\bJ_{\bu})=\bB_{\rm s}\bC_\bu\bT_\bu,\end{equation*}
and then substituting it into the expression of $\hat\bf_{\rm e}(k)$ yields \\
\begin{equation*}\hat\bf_{\rm e}(k)=\bB_{\rm s}\bC_\bu\bT_\bu\bar\bx_{\bu}^{\rm J}(k),\end{equation*}
which means $\begin{pmatrix} \hat\bf_{\rm e}(0) & \hat\bf_{\rm e}(1) & \cdots & \hat\bf_{\rm e}(N)\end{pmatrix}=\bB_{\rm s}\begin{pmatrix} \bu(0) & \bu(1) & \cdots & \bu(N)\end{pmatrix}$, and finally we can obtain
\begin{equation*}\hat\bB_{\rm s}=\begin{pmatrix} \hat\bf_{\rm e}(0) & \hat\bf_{\rm e}(1) & \cdots & \hat\bf_{\rm e}(N)\end{pmatrix}\begin{pmatrix} \bu(0) & \bu(1) & \cdots & \bu(N)\end{pmatrix}^{\dag}.\end{equation*}
\item
According to the value of the matrix $\hat\bB_{\rm s}$, we can obtain the estimate of $\bD_{\rm s}$ as $\bD_{\rm s}=\bC_{\rm ac}\bH_3\hat\bB_{\rm s}$, where $\bH_3=\begin{pmatrix} \bzero_{n\times n} & \bI_{n\times n}\end{pmatrix}$.
\end{enumerate}

There is also another kind of way to calculate the estimation of $\hat\bf_{\rm e}(k)$, i.e., first choose one state value from the estimated state sequence $\left\{\bar\bx_{\bu}^{\rm J}(k),0\leqslant k\leqslant N\right\}$ as the initial state value, then based on this initial value, use the state space model
$$\begin{dcases} \bar\bx_{\bu}^{\rm J}(k+1)=\hat\bJ_\bu\bar\bx_{\bu}^{\rm J}(k) \\ \hat\bf_{\rm e}(k)=\hat\bT_{\rm s}(-\hat\bJ_{\rm s}\hat\bX+\hat\bX\hat\bJ_{\bu})\bar\bx_{\bu}^{\rm J}(k)\end{dcases}$$ to calculate $\hat\bf_{\rm e}(k)$ recursively. However, this method may encounter a problem that $\hat\bJ_\bu$ can be unstable, which can lead to the unbounded value of $\hat\bf_{\rm e}(k)$.

Finally, the specific identification algorithm is described in Algorithm \ref{algorithma}, the calculations of $f_{{\rm nat},i}$, $\zeta_i$, $\mathcal{K}$, $\mathcal{C}$, and $\mathcal{B}$ are also included based on using the estimates of $\bar\bA_{\rm a}^{\rm J}$, $\hat\bA_{\rm s}$ and $\hat\bB_{\rm s}$.


\begin{algorithm}[ht] 
\caption{PSSID Algorithm}\label{algorithma}
\SetKwInOut{Input}{Input}
\SetKwInOut{Output}{Output}
\Input{$\left\{\bu(k),0\leqslant k\leqslant N\right\}, \left\{\by(k),0\leqslant k\leqslant N\right\}, \bC_{\rm p}, \bC_{\rm v}, \bC_{\rm ac}, n$}
\Output{$\hat\bA_{\rm s}, \hat\bB_{\rm s}, \hat\bC_{\rm s}, \hat\bD_{\rm s}, \left\{\hat\bf_{\rm e}(k),0\leqslant k\leqslant N\right\}, \hat{f}_{{\rm nat},i}, \hat\zeta_i, \hat{\mathcal{K}}, \hat{\mathcal{D}}, \hat{\mathcal{B}}$}
\tcc{Step I-Augmented model identification}
Compute $\hat\bA_{\rm a}^{\rm d}$ and $\hat\bC_{\rm a}^{\rm d}$ using pure stochastic system identification method \tcp*{See Section \ref{PSSID1}}
$\hat\bA_{\rm a}\gets\frac{1}{T_{\rm s}}\ln(\bA_{\rm a}^{\rm d})$ \tcp*{See (\ref{eq20})}
$\hat\bC_{\rm a}\gets\hat\bC_{\rm a}^{\rm d}$ \tcp*{See (\ref{eq21})}
\tcc{Step II-Real Jordan form and canonical form}
Derive the real Jordan form $(\bA_{\bu}^{\rm J},\bC_{\bu}^{\rm J})$ of the periodic signal model (\ref{eq16}) \tcp*{See (\ref{eq24})}
Derive the real Jordan form of the physical model (\ref{eq3}) \tcp*{See (\ref{eq25})}
Derive an augmented model (\ref{eq26}) in canonical form based on (\ref{eq24}) and (\ref{eq25}) \tcp*{See (\ref{eq26})}
Derive the real Jordan form $(\bA_{\rm J},\bC_{\rm J})$ of the identified model of the augmented system (\ref{eq19a}) \tcp*{See (\ref{eq28}) and (\ref{eq29})}
Derive estimates of modal parameters $f_{{\rm nat},i}$ and $\zeta_i$ of the physical system (\ref{eq3}) using $\bar\bA_{\rm a}^{\rm J}$ \tcp*{See (\ref{eq34}) and (\ref{eq35})}
Derive a canonical form $(\hat\bA_{\rm a}^{\rm J},\hat\bC_{\rm a}^{\rm J})$ of the augmented model $(\bar\bA_{\rm a}^{\rm J},\bar\bC_{\rm a}^{\rm J})$ \tcp*{See (\ref{eq36}) and (\ref{eq37})}
\tcc{Step III-Identification of the state-space model in physical coordinates}
Compute $\hat\bA_{\rm s}$ and $\hat\bC_{\rm s}$ by comparing (\ref{eq26}) with the canonical-form realization $(\hat\bA_{\rm a}^{\rm J},\hat\bC_{\rm a}^{\rm J})$ \tcp*{See (\ref{eq39})-(\ref{eq42})}
Compute $\left\{\hat\bf_{\rm e}(k),0\leqslant k\leqslant N\right\}$, $\hat\bB_{\rm s}$ and $\hat\bD_{\rm s}$ \tcp*{See Step III}
$\hat{\mathcal{K}}\gets\hat\bA_{\rm s}(n+1:2n,1:n)$ \tcp*{See (\ref{eq3})}
$\hat{\mathcal{C}}\gets\hat\bA_{\rm s}(n+1:2n,n+1:2n)$ \tcp*{See (\ref{eq3})}
$\hat{\mathcal{B}}\gets\hat\bB_{\rm s}(n+1:2n,1:r)$ \tcp*{See (\ref{eq3})}
\textbf{Return} $\hat\bA_{\rm s}, \hat\bB_{\rm s}, \hat\bC_{\rm s}, \hat\bD_{\rm s}, \left\{\hat\bf_{\rm e}(k),0\leqslant k\leqslant N\right\}, \hat{f}_{{\rm nat},i}, \hat\zeta_i, \hat{\mathcal{K}}, \hat{\mathcal{D}}$, and $\hat{\mathcal{B}}$.
\end{algorithm}

At the end of this section, three points needs to be clarified for stochastic subspace identification (the ``stochastic algorithm $3$''):
\begin{enumerate}[{(1)}]
\item
Consistency: A steady-state solution of the Riccati equation used for the finite interval of the measurement data may lead to biased covariance sequence estimation \cite{Verhaegen2007}.
\item
Stability: It can be seen that the eigenvalues of the signal model (\ref{eq16}) are close to the bound of the unit circle, so the identified model can be unstable, however, we do not need to deal with the stability of the identified system matrix of the signal model, while we only need to take an action to enforce the system (\ref{eq3}) to be stable \cite{Ljung2016}.
\item
Causality: As described in \cite{Qin2006} and \cite{Qin2005}, there will be a causality problem if the combined deterministic-stochastic identification methods proposed in \cite{Overschee1996} are used, even though there is no problem from a consistency point of view. In our paper, we use the pure stochastic identification method without the effect from the deterministic input, so we do not need to consider about the causality problem.
\end{enumerate}

\subsection{Blind identification}\label{BLSID}
In practice, finite-length input and output signals are collected for offline system identification, and sometimes we cannot choose the type of the input signal freely, as a result the input signal can be a non-periodic signal, so the proposed identification method above based on using periodic input signal can not be used directly. However, it is known that every non-periodic signal can be seen as a periodic signal with an infinite period. Based on this knowledge, a proposition (Proposition \ref{propositiona}) for the finite-length band-limited signal modeling is first given, and then a numerical example is used to demonstrate the effectiveness of the finite-length signal modeling. The finite-length band-limited signal can belong to a part of either a periodic signal or a part of non-periodic signal, so the proposed identification method in Section \ref{PSSID} could still be implemented. Following Proposition \ref{propositiona}, another proposition about the signal model order is introduced. As aforementioned, in practice sometimes we may not know both type (e.g., periodic and non-periodic) and value of the input signal, based on this consideration, finally a blind identification method is derived by modifying the proposed identification method in Section \ref{PSSID}.

\subsubsection{Finite-length signal modeling}
\begin{proposition}\label{propositiona}
For a discrete-time, length-limited, and band-limited signal $v(k)\in\mathbb{R}$, it can be extended into a periodic signal $v_{\rm p}(k)\in\mathbb{R}$, and the period of the expanded periodic signal $v_{\rm p}(k)$ is $NT_{\rm s}$, where $N$ is the sampling points number of $v(k)$, and $T_{\rm s}$ denotes the sampling time. If the continuous-time version of the signal $v_{\rm p}(k)$ is a continuous function and satisfies the Dirichlet conditions in the interval $[0,NT_{\rm s})$, the periodic signal $v_{\rm p}(k)$ can be expressed as a Fourier series in the interval $[0,NT_{\rm s})$, that is
\begin{align}
v_{\rm p}(k)&=\frac{a_0}{2}+\sum_{n=1}^{\infty}(a_n\cos(nwkT_{\rm s})+b_n\sin(nwkT_{\rm s})) \nonumber \\
&=\alpha_0+\sum_{n=1}^{\infty}\alpha_n\sin(nwkT_{\rm s}+\beta_n),\label{eq62}
\end{align}
where $w=\frac{2\pi}{NT_{\rm s}}$, and the coefficients $a_0=\frac{2}{NT_{\rm s}}\int_{-\frac{NT_{\rm s}}{2}}^{\frac{NT_{\rm s}}{2}}v_{\rm p}(t)\, \mathrm{d}t$, $a_n=\frac{2}{NT_{\rm s}}\int_{-\frac{NT_{\rm s}}{2}}^{\frac{NT_{\rm s}}{2}}v_{\rm p}(t)\cos(nwt)\, \mathrm{d}t$, and $b_n=\frac{2}{NT_{\rm s}}\int_{-\frac{NT_{\rm s}}{2}}^{\frac{NT_{\rm s}}{2}}v_{\rm p}(t)\sin(nwt)\, \mathrm{d}t$.
\end{proposition}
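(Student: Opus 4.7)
The plan is to reduce the claim to the classical Dirichlet theorem on Fourier series convergence. First I would construct the periodic extension explicitly: given the finite-length sequence $v(k)$ for $k=0,1,\ldots,N-1$, define $v_{\rm p}(k) := v(k \bmod N)$ for every integer $k$, so that $v_{\rm p}$ has period $N$ in the discrete index and the corresponding continuous-time signal has period $NT_{\rm s}$. The fundamental angular frequency is then $w = 2\pi/(NT_{\rm s})$, which is exactly the value stated in the proposition. Because $v(k)$ is band-limited, the continuous-time interpolant of $v_{\rm p}$ is a well-defined continuous function whose spectral content lies below a finite cutoff; this regularity is what the hypotheses of the proposition secure.

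Next I would invoke Dirichlet's theorem. The hypothesis that the continuous-time version of $v_{\rm p}(k)$ is continuous on $[0, NT_{\rm s})$ and satisfies the Dirichlet conditions (finitely many extrema, finitely many discontinuities, and absolute integrability on one period) guarantees that its Fourier series converges pointwise to the function itself. Writing out the standard real Fourier series on the interval $[-NT_{\rm s}/2, NT_{\rm s}/2)$ (which is equivalent to $[0, NT_{\rm s})$ by periodicity) yields
\begin{equation*}
v_{\rm p}(t) = \frac{a_0}{2} + \sum_{n=1}^{\infty}\bigl(a_n\cos(nwt) + b_n\sin(nwt)\bigr),
\end{equation*}
with the coefficients $a_0$, $a_n$, $b_n$ defined by the integrals stated in the proposition. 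Sampling this identity at $t = kT_{\rm s}$ gives the first line of (\ref{eq62}).

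Finally, the second line of (\ref{eq62}) is obtained by collapsing each harmonic pair into a single sinusoid via the standard amplitude-phase identity $a_n\cos\theta + b_n\sin\theta = \alpha_n\sin(\theta + \beta_n)$, where $\alpha_n = \sqrt{a_n^2 + b_n^2}$ and $\beta_n = \operatorname{atan2}(a_n, b_n)$, together with $\alpha_0 = a_0/2$. This is a direct trigonometric rewriting term by term and requires no further analytic input.

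The only conceptual obstacle is verifying that the continuous-time interpolant of $v_{\rm p}$ truly satisfies the Dirichlet conditions; since this is an explicit hypothesis of the proposition (continuity plus Dirichlet conditions on $[0, NT_{\rm s})$), the proof is otherwise essentially a direct invocation of the classical theorem. The band-limited assumption on $v$ is what ensures that a single smooth continuous-time representative exists and makes the identification between the discrete samples and the continuous Fourier representation unambiguous.
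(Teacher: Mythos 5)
The paper gives no proof of this proposition at all, stating only that ``the result \ldots is obvious''; your argument correctly supplies the standard justification it is implicitly relying on (periodic extension with period $NT_{\rm s}$, Dirichlet's convergence theorem for the continuous-time representative, sampling at $t=kT_{\rm s}$, and the amplitude--phase rewriting $a_n\cos\theta+b_n\sin\theta=\alpha_n\sin(\theta+\beta_n)$). The one point worth flagging is that you rightly identify the continuity/Dirichlet hypothesis on the interpolant as the real content: a generic periodic extension of a finite record has a jump at the period boundary (where $v(N-1)$ wraps to $v(0)$), so the proposition only holds because that regularity is assumed rather than derived, and your proof makes this dependence explicit where the paper leaves it silent.
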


Since the result of Proposition \ref{propositiona} is obvious, the proof of it is not given here. Based on Proposition \ref{propositiona} and (\ref{eq62}), in the interval $[0,NT_{\rm s})$, the length-limited and band-limited signal $v(k)$ can be approximately modeled as the output of a linear autonomous state-space model like the periodic signal modeling described in Section \ref{periodic signal model}, that is
\begin{equation}\label{eq63}
\begin{dcases}
\bx_v(k+1)=\bA_v^{\rm d}\bx_v(k), \\
v(k)=\bC_v^{\rm d}\bx_v(k)+e_v(k),
\end{dcases}
\end{equation}
where $e_v(k)$ denotes the part of $v(k)$ not considered, for $0\leqslant k\leqslant NT_{\rm s}$.

The matrices $\bA_v^{\rm d}$ and $\bC_v^{\rm d}$ of the system (\ref{eq63}) are given as
\begin{equation*}
\bA_v^{\rm{d}}=
\begin{pmatrix}
1 & \bzero & \cdots & \bzero \\
\bzero & \bA_{v,\rm{1}}^{\rm{d}} & \ddots & \vdots \\
\vdots & \ddots & \ddots & \bzero \\
\bzero & \cdots & \bzero & \bA_{v,n_{\rm{a}}-1}^{\rm{d}} \\
\end{pmatrix}
\end{equation*}
and
\begin{equation*}
\bC_v^{\rm{d}}=
\begin{pmatrix}
1 & \bC_{v,\rm{1}}^{\rm{d}} & \cdots & \bC_{v,n_{\rm{a}}-1}^{\rm{d}} \\
\end{pmatrix}.
\end{equation*}
The individual block entries in these block matrices can be represented as
\begin{equation*}
\bA_{v,i}^{\rm{d}}=
\begin{pmatrix}
\cos(i2\pi \frac{1}{N}) & \sin(i2\pi \frac{1}{N})\vspace{1ex} \\
-\sin(i2\pi \frac{1}{N}) & \cos(i2\pi \frac{1}{N}) \\
\end{pmatrix}
\end{equation*}
and
\begin{equation*}
\bC_{v,i}^{\rm{d}}=
\begin{pmatrix}
1 & 0 \\
\end{pmatrix}.
\end{equation*}

\begin{remark}
For the length-limited signal $v(k)$, it can be only represented as the output of the system (\ref{eq63}) in the interval $[0,NT_{\rm s})$.
\end{remark}

Based on Proposition \ref{propositiona} and the signal model (\ref{eq63}), we can know that for a length-limited and non-periodic signal, we can also regard it as a period of a periodic signal with the period $NT_{\rm s}$. This is suitable for practical application, since in practice the input and the output can be non-periodical. Even though in theory Proposition \ref{propositiona} and the modeling of the non-periodic signal are feasible, the effectiveness of them should be checked, one way is to estimate the non-periodic signal using the model (\ref{eq63}) and a given model, for the specific operation we can use the following example to illustrate.

\begin{exmp}\label{exampa}
Given a non-periodic and band-limited signal $v(k)\in\mathbb{R}$ with length $N=60,000$, then extract one part $v_1(k)$ with length $N_1=10,000$ from $v(k)$, these two signals are shown in Figure \ref{fig3}. Use $v_1(k)$ as an input to excite an observable system (\ref{eq64}) (Its Bode plot is shown in Figure \ref{fig4}.) to obtain a simulated output $y(k)$ with length $N$, and then use the output $y(k)$ to estimate the input signal $v_1(k)$ based on using the method described in Figure \ref{fig5}, it can also be referred to \cite{Han2018a} to see more details about the method, and finally the estimation results of $v_1(k)$ are shown in Figure \ref{fig6} and Figure \ref{fig7}. The sampling frequency is chosen as $10,000$ Hz, and the observable system is given as follows
\begin{equation}\label{eq64}
\begin{dcases}
\bx(k+1)=\bA\bx(k)+\bB v_1(k)+\bG e(k), \\
y(k)=\bC\bx(k)+e(k),
\end{dcases}
\end{equation}
where $\bA=\begin{pmatrix}0.5 & 0\\0 & 0.6\end{pmatrix}$, $\bB=\begin{pmatrix}1 \\ 0.5\end{pmatrix}$, $\bC=\begin{pmatrix}1 & 1 \end{pmatrix}$, and $\bG=\begin{pmatrix}1 \\0.5 \end{pmatrix}$. $\{e(k)\}$ is zero-mean white noise sequence with covariance $0.0001$.
\end{exmp}

\begin{figure}[ht]
\centering
\includegraphics[scale=0.2,angle=0]{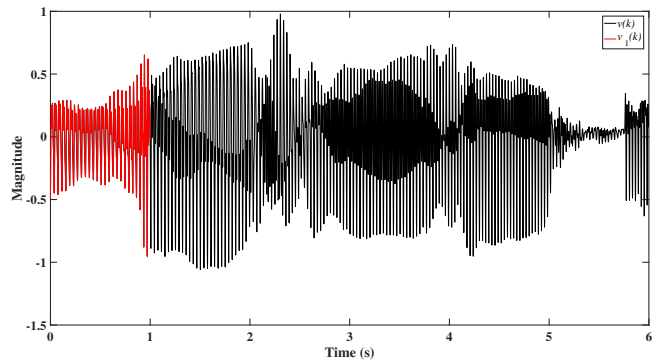}
\caption{Non-periodic signal and one of its part.}
\label{fig3}
\end{figure}

\begin{figure}[ht]
\centering
\includegraphics[scale=0.2,angle=0]{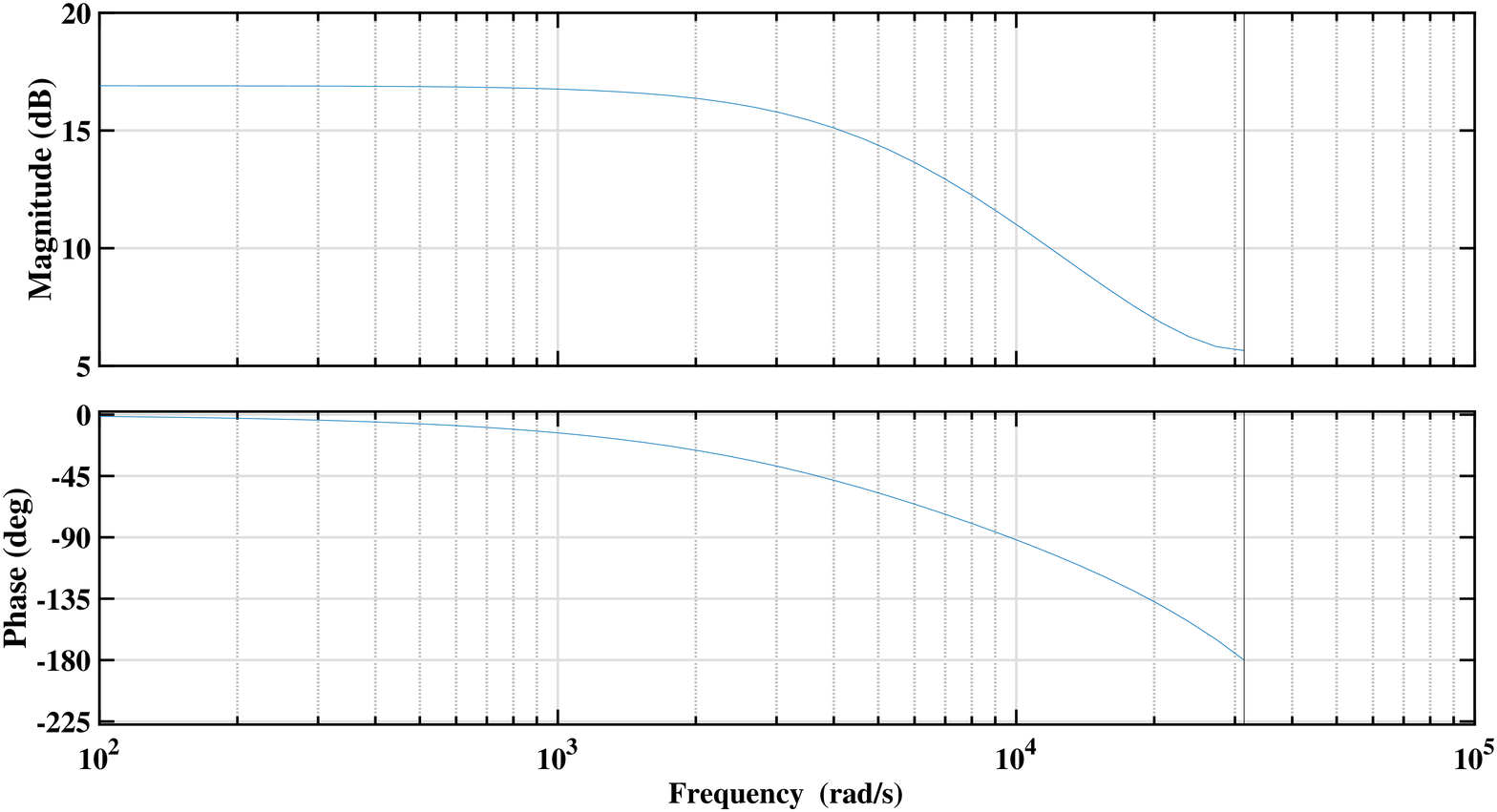}
\caption{Bode plot of the system (\ref{eq64}).}
\label{fig4}
\end{figure}

\begin{figure}[ht]
\centering
\includegraphics[scale=0.5,angle=0]{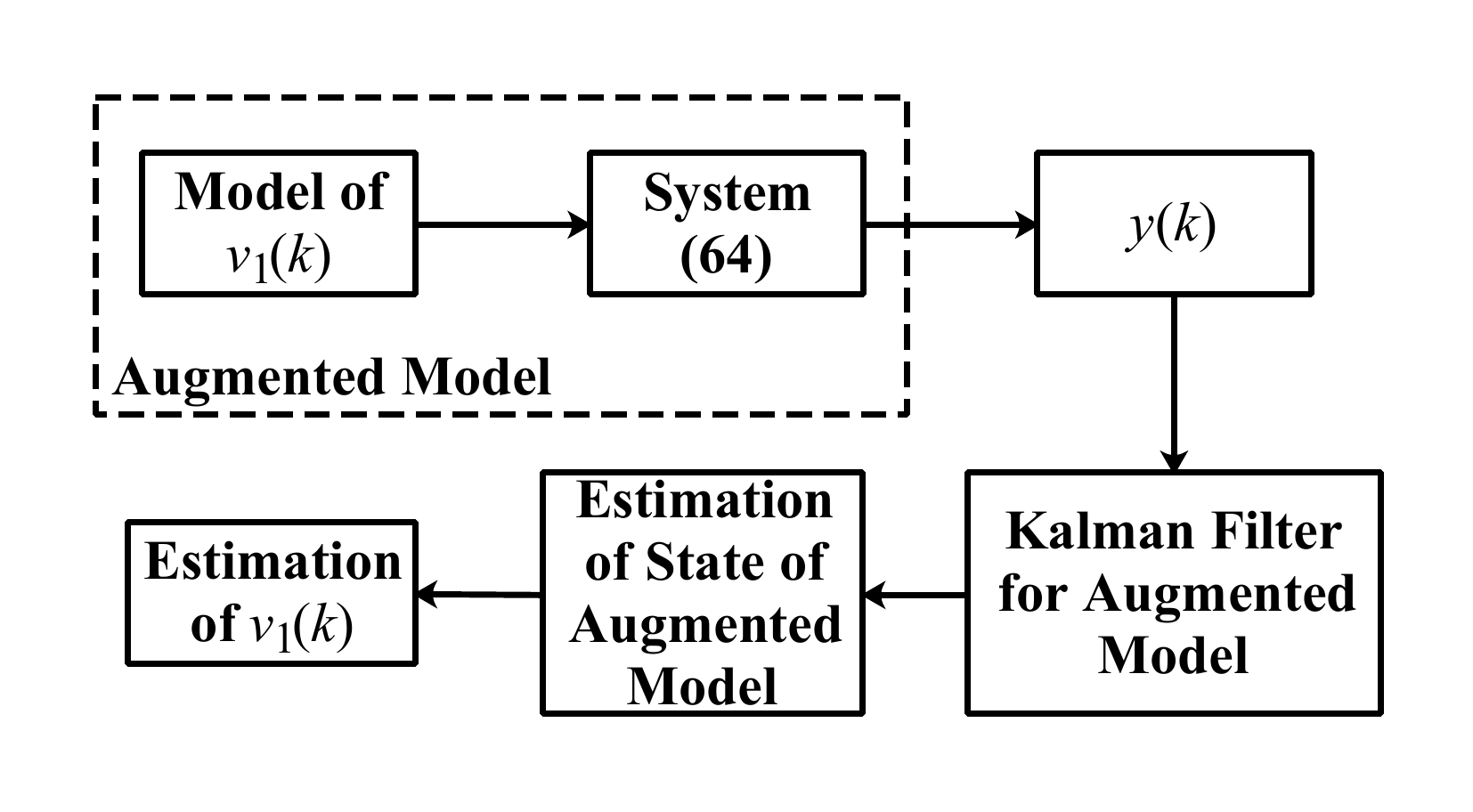}
\caption{Block diagram for estimation method.}
\label{fig5}
\end{figure}

\begin{figure}[ht]
\centering
\includegraphics[scale=0.2,angle=0]{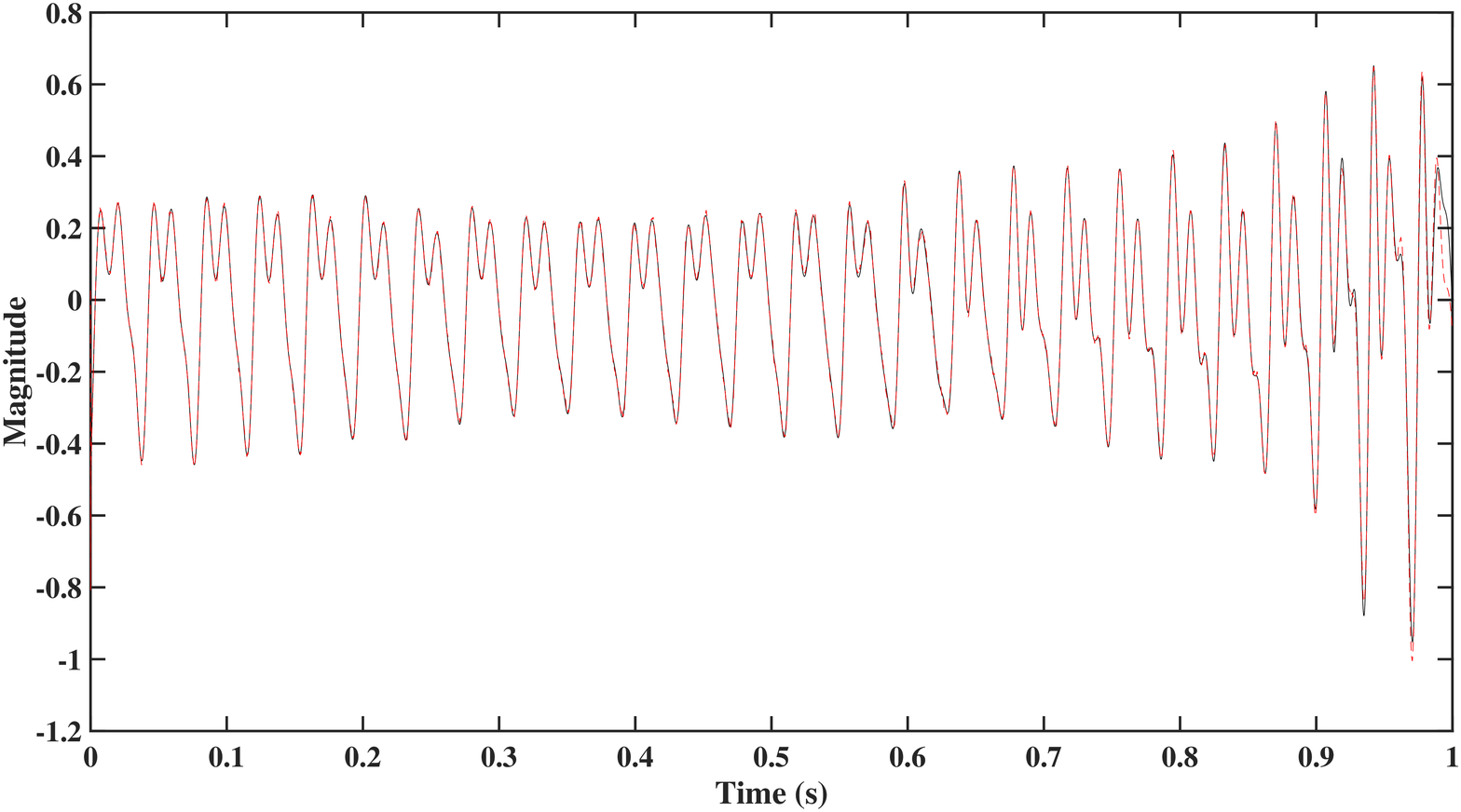}
\caption{True value (in black) and estimation value (in red) in time domain.}
\label{fig6}
\end{figure}

\begin{figure}[ht]
\centering
\includegraphics[scale=0.2,angle=0]{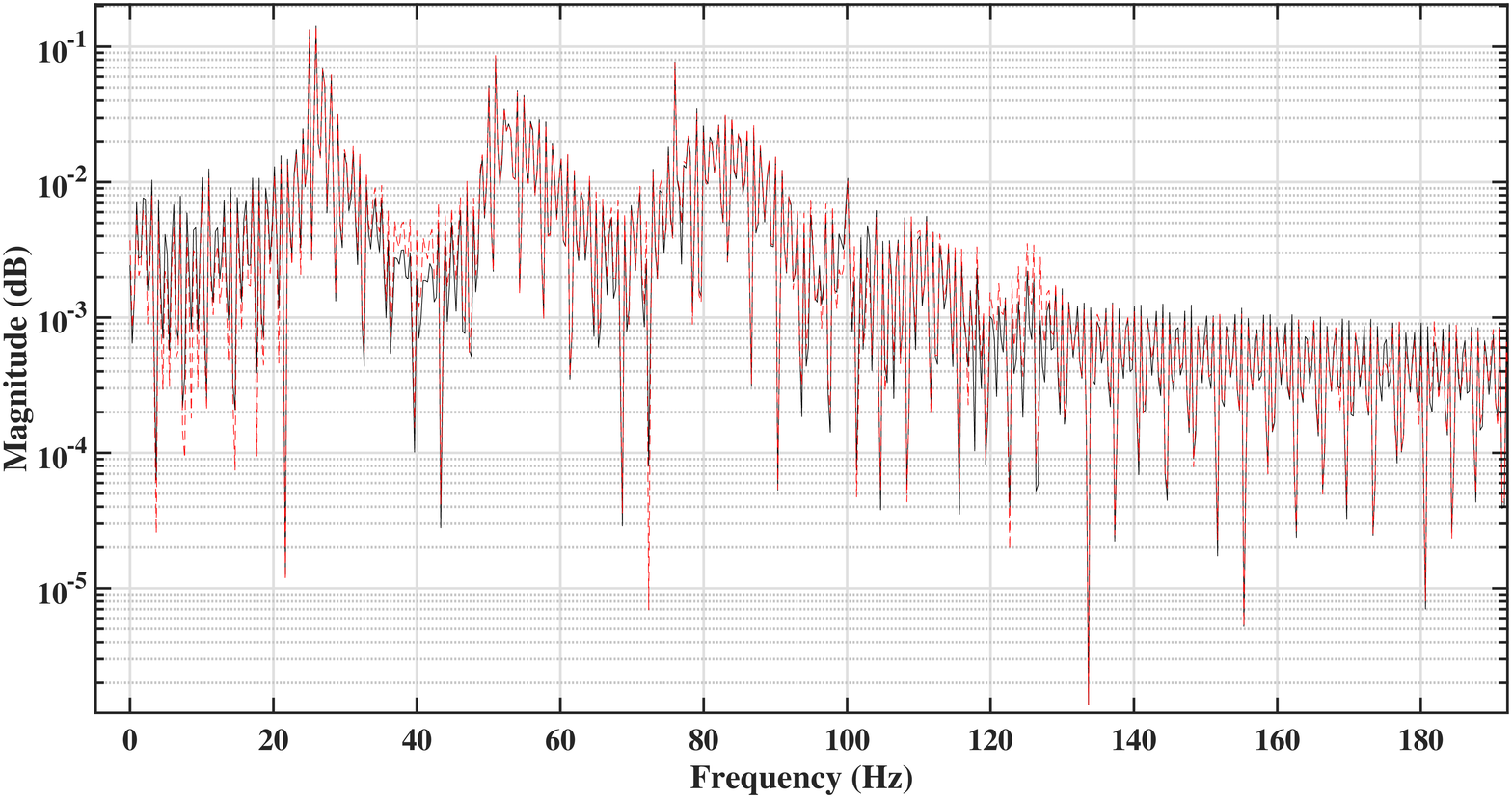}
\caption{True value (in black) and estimation value (in red) in frequency domain.}
\label{fig7}
\end{figure}

According to the estimation result from Example \ref{exampa}, it can show the effectiveness of Proposition \ref{propositiona} for a length-limited non-periodic signal estimation. As a comparison, for an infinite-length non-periodic signal , the modeling of it can be described in the following proposition.

\begin{proposition}\label{propositionb}
Given an infinite-length non-periodic signal $u(t)\in\mathbb{R}$ which is continuous and satisfies Dirichlet conditions in any limited interval and absolutely integrable in the interval $(-\infty,+\infty)$, it cannot be described as the output of a linear time-invariant autonomous state-space model with a fixed model order except one case that the non-periodic signal $u(t)$ is a sum of periodic signals. While for the discrete-time version $u(k)$, in any case it cannot be described as the output of a linear time-invariant autonomous state-space model (LTIASSM) with a fixed model order.
\end{proposition}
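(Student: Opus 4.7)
The plan is to characterize explicitly the class of signals that can be produced by a finite-dimensional LTI autonomous state-space model, and then show that the hypotheses on $u$ force any candidate representation to collapse. The underlying tool is the real Jordan decomposition of the system matrix, as described in \ref{appendixa}.

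First I would write down the explicit output. In continuous time, any LTIASSM output has the form $u(t)=\bC e^{\bA t}\bx(0)$; via $\bA=\bT\bJ\bT^{-1}$ this expands into a finite linear combination of elementary modes $t^{p}e^{\sigma t}\cos(\omega t)$ and $t^{p}e^{\sigma t}\sin(\omega t)$, where the pairs $\sigma\pm j\omega$ range over the eigenvalues of $\bA$ and the polynomial degree $p$ is bounded by the corresponding Jordan block size. The discrete-time output $u(k)=\bC \bA^{k}\bx(0)$ admits the analogous expansion with $|\lambda|^{k}$ in place of $e^{\sigma t}$ and $k\arg\lambda$ in place of $\omega t$.

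Next I would classify the asymptotic behaviour of each mode in the continuous case. A mode with $\sigma>0$ blows up as $t\to+\infty$ and a mode with $\sigma<0$ blows up as $t\to-\infty$; modes with $\sigma=0$ and $p=0$ are bounded pure sinusoids or constants, while modes with $\sigma=0$ and $p\geq 1$ grow polynomially in $|t|$. Absolute integrability on $(-\infty,+\infty)$ rules out every mode whose coefficient is non-zero unless $\sigma=0$ and $p=0$. Consequently the only candidate outputs reduce to finite sums $\sum_{i}\alpha_{i}\cos(\omega_{i}t+\beta_{i})$, which are by definition sums of periodic signals; if the $\omega_{i}$ are rationally commensurate the sum is periodic, otherwise it is almost periodic (non-periodic but still a sum of periodic signals). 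This is precisely the advertised exception, and any non-periodic $u(t)$ that is not of this form cannot be realized.

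For the discrete-time statement I would reuse the mode analysis. A mode $k^{p}|\lambda|^{k}(\cos/\sin)(k\arg\lambda)$ blows up as $k\to+\infty$ when $|\lambda|>1$ and as $k\to-\infty$ when $|\lambda|<1$; when $|\lambda|=1$ and $p=0$ the mode is bounded, and when $|\lambda|=1$ and $p\geq 1$ it grows polynomially in $|k|$. As in the continuous case, summability of $u(k)$ along $\mathbb{Z}$ eliminates every growing mode, while the surviving bounded modes can only generate signals of the form $\sum_{i}\alpha_{i}\cos(\omega_{i}k+\beta_{i})$; such signals are never absolutely summable unless identically zero, so no non-trivial, non-periodic, summable $u(k)$ is representable at any fixed finite order, giving the sharper ``in any case'' assertion. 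The main obstacle, as I see it, is articulating the sum-of-periodic-signals exception carefully in the continuous case, because it sits in formal tension with the absolute-integrability premise: pure sinusoids are not in $L^{1}(\mathbb{R})$. The proof therefore really splits into two complementary sub-claims, namely that under strict $L^{1}$ integrability the only representable output is $u\equiv 0$, and that if integrability is relaxed to boundedness on $\mathbb{R}$ the only non-zero representable non-periodic outputs are the almost-periodic sinusoidal sums; both follow directly from the exponential-growth/decay dichotomy, with care needed only in matching them to how the proposition is phrased.
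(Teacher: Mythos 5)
Your route is genuinely different from the paper's. The paper argues in three informal steps: for $u(t)$ not a sum of periodic signals it appeals to Proposition~\ref{propositiona}, observing that the Fourier-based autonomous realization of a length-limited window has an order that grows with the window length and that realizations of different orders cannot be related by a similarity transformation; for the exceptional case it notes that each periodic summand is itself realizable, so the sum is; and for the discrete-time claim it simply cites a reference. Your real-Jordan-form mode decomposition with the growth/decay trichotomy is self-contained, it addresses \emph{all} possible fixed-order realizations rather than one particular construction (which is the main logical weakness of the paper's first step), and it correctly isolates the $\sigma=0$, $p=0$ sinusoidal modes as the only survivors under a two-sided integrability constraint. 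In that respect your continuous-time argument is tighter than the paper's own.

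There are, however, two gaps relative to the proposition as stated. First, an autonomous state-space model is ordinarily run forward from an initial state, so the representation $u(t)=\bC e^{\bA t}\bx(0)$ need only hold on $[0,\infty)$; there, decaying modes with $\sigma<0$ are absolutely integrable, continuous, non-periodic, and exactly realizable at fixed finite order (e.g. $u(t)=e^{-t}$ from a first-order model), so your elimination of the $\sigma<0$ modes silently requires the two-sided reading of the hypotheses and should be stated as such. Second, and more seriously, your discrete-time argument covers only absolutely summable $u(k)$ and concludes $u\equiv 0$; it cannot deliver the ``in any case'' assertion, because the very case the proposition means to exclude --- a sampled sum of periodic signals with incommensurate periods, which is bounded but not summable --- \emph{is} realizable by a fixed-order LTIASSM (a block diagonal of rotation matrices). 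A symmetric mode analysis therefore cannot reproduce the continuous/discrete asymmetry the proposition asserts; the paper sidesteps this by citation, and your own (correct) observation that sinusoids are not in $L^1$ already signals that the statement, read literally, forces $u\equiv 0$ in both cases and leaves the advertised exception vacuous rather than established.
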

\begin{proof}
Under the preconditions mentioned in Proposition \ref{propositionb}, we use the following three aspects to prove this proposition:
\begin{enumerate}[{(1)}]
\item
When the continuous-time $u(t)$ is not a sum of periodic signals, according to Proposition \ref{propositiona}, under some preconditions, for a length-limited signal, it can be represented as the output of an autonomous state-space model in a limited time interval, i.e., for different time lengths, the model orders of the autonomous state-space models are different, and these different state-space models cannot transform to each other using a similar transformation matrix or model reduction techniques, so in this aspect, $u(t)$ can not be described as the output of a LTIASSM with a fixed model order.
\item
When the continuous-time $u(t)$ is a sum of periodic signals, i.e., $u(t)=x_1(t)+x_2(t)+\ldots+x_N(t)$, where $x_i(t)$ for $1\leqslant i\leqslant N$ is a periodic signal with fundamental period $T_i$ seconds, and not every one of the ratios $\frac{T_1}{T_i}$ for $i=2,3,\ldots,N$ is a rational number. Since $u(t)$ is a sum of periodic signal, in addition every periodic signal can be modeled as the output of a LTIASSM with a fixed model order (see Section \ref{periodic signal model}), $u(t)$ can be also described as the output of a LTIASSM with a fixed model order.
\item
The discrete-time $u(k)$ cannot be described as the output of a LTIASSM with a fixed model order \cite{Palacios1998}.
\end{enumerate}
\end{proof}

\begin{remark}\label{remarka}
According to Proposition \ref{propositiona} and Proposition \ref{propositionb}, for a length-limited and band-limited signal $v(t)$ in continuous time (The signal $v(t)$ can belong to a part of either a periodic signal or a non-periodic signal), if it is continuous and satisfies Dirichlet conditions in the interval $[0,L)$, where $L$ is the length of $v(t)$, the discrete-time signal $v(k)$ can be denoted as the output of a LTIASSM with a fixed model order, and the model order size depends on both length of the signal $v(k)$ and the maximum frequency of $v(k)$. Furthermore, if $v(k)$ belongs to a part of a periodic signal, the model order size of the model of $v(k)$ can also be chosen as a fixed number not changing with the length of $v(k)$, however for this case we should know the fundamental period of the period signal which contains the part $v(k)$, which is not possible sometimes since we may not know the type of the collected signal.
\end{remark}

\subsubsection{Blind identification algorithm}
Based on the above analysis, the proposed system identification method in Section \ref{PSSID} can be implemented under either periodic excitations or non-periodic excitations, and the input signal model order is fixed up to a specific signal length, maybe the collected input signal belongs to a part of a periodic signal of which the model order does not change with the signal length, which is mentioned in Remark \ref{remarka}, but in practical application, this information is usually not known, so in this section we assume that we do not know the type of the input signal, and use Proposition \ref{propositiona} to build a model for it. Furthermore, in practice, e.g., in structure area, usually the input force can not be measured, so a blind identification method should be considered and involved. An important and critical step prior to the design (and use) of the blind identification of the physical model (\ref{eq3}) is to check the observability of the augmented model (\ref{eq19}) since subspace identification is involved in the blind identification. In \ref{appendixb}, the observability is studied and checked. In this section, we temporarily assume that the augmented model (\ref{eq19a}) is observable and derive a blind identification algorithm based on the proposed identification method in Section \ref{PSSID}. The blind identification algorithm using only a finite-length output signal is summarized in the following.

\begin{algorithm}[ht]
\caption{Blind Identification Algorithm}\label{algorithmb}
\SetKwInOut{Input}{Input}
\SetKwInOut{Output}{Output}
\Input{$\left\{\bu(k),0\leqslant k\leqslant N\right\}, \left\{\by(k),0\leqslant k\leqslant N\right\}, \bC_{\rm p}, \bC_{\rm v}, \bC_{\rm ac}, n$}
\Output{$\hat\bA_{\rm s}, \hat\bC_{\rm s}, \left\{\hat\bf_{\rm e}(k),0\leqslant k\leqslant N\right\}, \hat{f}_{{\rm nat},i}, \hat\zeta_i, \hat{\mathcal{K}}, \hat{\mathcal{D}}$}
\tcc{Step I-Augmented model identification}
Compute $\hat\bA_{\rm a}^{\rm d}$ and $\hat\bC_{\rm a}^{\rm d}$ using pure stochastic system identification method \tcp*{See Section \ref{PSSID1}}
$\hat\bA_{\rm a}\gets\frac{1}{T_{\rm s}}\ln(\bA_{\rm a}^{\rm d})$ \tcp*{See (\ref{eq20})}
$\hat\bC_{\rm a}\gets\hat\bC_{\rm a}^{\rm d}$ \tcp*{See (\ref{eq21})}
\tcc{Step II-Real Jordan form and canonical form}
Derive the real Jordan form $(\bA_{\bu}^{\rm J},\bC_{\bu}^{\rm J})$ of the periodic signal model (\ref{eq16}) \tcp*{See (\ref{eq24})}
Derive the real Jordan form of the physical model (\ref{eq3}) \tcp*{See (\ref{eq25})}
Derive an augmented model (\ref{eq26}) in canonical form based on (\ref{eq24}) and (\ref{eq25}) \tcp*{See (\ref{eq26})}
Derive the real Jordan form $(\bA_{\rm J},\bC_{\rm J})$ of the identified model of the augmented system (\ref{eq19a}) \tcp*{See (\ref{eq28}) and (\ref{eq29})}
Derive estimates of modal parameters $f_{{\rm nat},i}$ and $\zeta_i$ of the physical system (\ref{eq3}) using $\bar\bA_{\rm a}^{\rm J}$ \tcp*{See (\ref{eq34}) and (\ref{eq35})}
Derive a canonical form $(\hat\bA_{\rm a}^{\rm J},\hat\bC_{\rm a}^{\rm J})$ of the augmented model $(\bar\bA_{\rm a}^{\rm J},\bar\bC_{\rm a}^{\rm J})$ \tcp*{See (\ref{eq36}) and (\ref{eq37})}
\tcc{Step III-Identification of the state-space model in physical coordinates}
Compute $\hat\bA_{\rm s}$ and $\hat\bC_{\rm s}$ by comparing (\ref{eq26}) with the canonical-form realization $(\hat\bA_{\rm a}^{\rm J},\hat\bC_{\rm a}^{\rm J})$ \tcp*{See (\ref{eq39})-(\ref{eq42})}
Compute $\left\{\hat\bf_{\rm e}(k),0\leqslant k\leqslant N\right\}$ \tcp*{See Step III}
$\hat{\mathcal{K}}\gets\hat\bA_{\rm s}(n+1:2n,1:n)$ \tcp*{See (\ref{eq3})}
$\hat{\mathcal{C}}\gets\hat\bA_{\rm s}(n+1:2n,n+1:2n)$ \tcp*{See (\ref{eq3})}
\textbf{Return} $\hat\bA_{\rm s}, \hat\bC_{\rm s}, \left\{\hat\bf_{\rm e}(k),0\leqslant k\leqslant N\right\}, \hat{f}_{{\rm nat},i}, \hat\zeta_i, \hat{\mathcal{K}}, \hat{\mathcal{D}}$.
\end{algorithm}

It should be noted that similar to the proposed identification method in Section \ref{PSSID}, the forepart of the estimation sequence $\left\{\hat\bx_{\rm a}^*(k),0\leqslant k\leqslant N\right\}$ can be inaccurate caused by the Kalman filtering when the initial value is not chosen well.

The last important point in this section is that in practical identification, the length of the output signal should be chosen enough for asymptotical identification purpose, however, if the length is large, according to Proposition \ref{propositiona} and Remark \ref{remarka}, the value of $NT_{\rm s}$ can then be large, as a result the dimension of the signal model can become large, which is not suitable for practical application, so a method should be taken to solve this problem. Figure \ref{fig8} gives us a possible way to solve this problem.

\begin{figure}[ht]
\centering
\includegraphics[scale=0.5,angle=0]{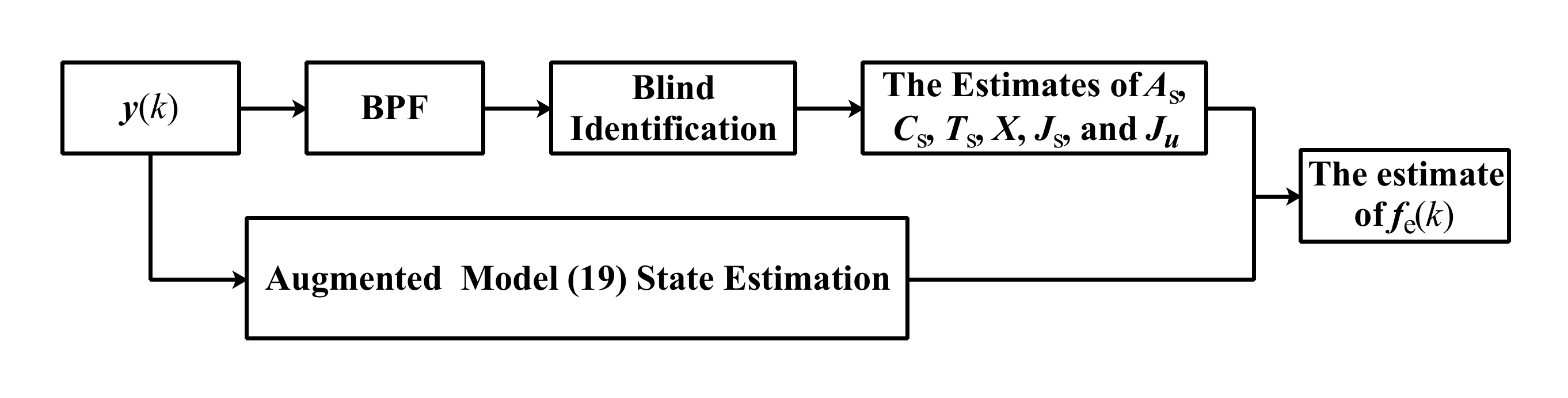}
\caption{Block diagrams for blind identification implementation. (BPF=band-pass filter.)}
\label{fig8}
\end{figure}

As shown in Figure \ref{fig8}, first a band-pass filter is used to decrease the frequency bandwidth of the output signal, i.e., to decrease frequency components of the output signal, which is equivalent to decreasing the size of the number $n_{\rm a}$. (The reason why they are equivalent is explained in the next paragraph.) But the persistent excitation condition in Definition \ref{definitiona} should be satisfied. Then the filtered output signal is used to identify the system according to the blind identification method proposed in this section. Finally based on using the state estimated using the original output signal $\by(k)$ and the Kalman filter for the augmented model (\ref{eq19}), the estimation sequence $\left\{\bar\bx_{\bu}^{\rm J}(k),0\leqslant k\leqslant N\right\}$ can be obtained, and afterwards the effective input sequence $\left\{\hat\bf_{\rm e}(k),0\leqslant k\leqslant N\right\}$ can be obtained. In practice, we should consider the values of both signal length and frequency bandwidth of the input signals, tune them together to get a small $n_{\rm a}$.

In the above, we say we tune the bandwidth of the input signal, which sounds a bit strange since the input signal is unknown. Actually, we can only change the bandwidth of the collected output signal using a band-pass filter. According to example \ref{exampa}, we can know that the estimated input signal’s frequency components are directly related with the output signal’s frequency components, i.e., if some frequency components do not occur in the output, we could not find the same frequency components in the estimated input. So based on the above description, tuning the input bandwidth means tuning the bandwidth of the output signal.

There are three important contributions in this section, the first one is that for non-periodic input signals, the proposed identification method in Section \ref{PSSID} can still be used, which is more practical since in practice most signals are non-periodic signals. The second one is that for input signal with unknown type and value, a blind identification method is derived. The third one is that a framework for practical use of the proposed blind identification method is derived to solve the problems caused by both signal length and frequency bandwidth of the input signal.

\section{Numerical examples}\label{s5}
To evaluate the proposed identification methods in Section \ref{PSSID} and Section \ref{BLSID}, an n-degree-of-freedom linear mass-spring-damper model as shown in Figure \ref{fig9} is used in this Section to check the effectiveness of the proposed identification methods. It represents a multi-story building with pure sheer if the last spring-damper $(k_{n+1},d_{n+1})$ is not considered. In Figure \ref{fig9}, $A_i$, $V_i$, and $P_i$ stand for the measurements of acceleration, velocity, and position of $i$-th block, respectively. As assumed in Assumption \ref{assum2}, we only use the acceleration sensor in this paper.

\begin{figure}[ht]
\centering
\includegraphics[scale=0.2,angle=0]{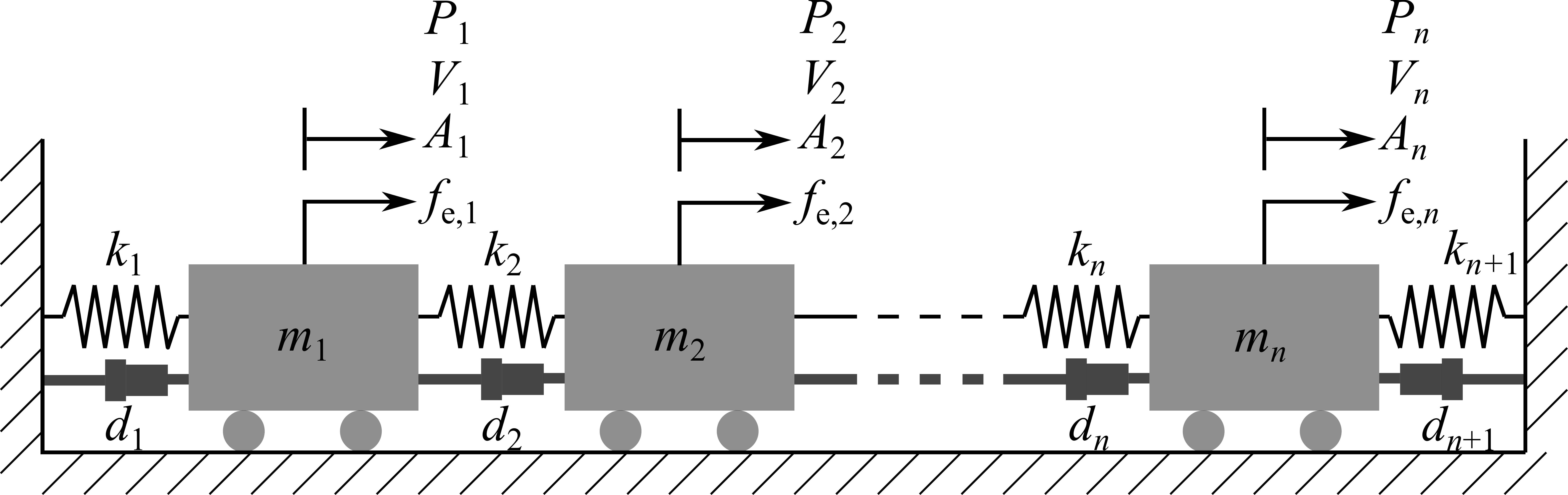}
\setlength{\abovecaptionskip}{0pt}
\setlength{\belowcaptionskip}{0pt}
\caption{Schematic of the mass-spring-damper model.}
\label{fig9}
\end{figure}

According to the system shown in Figure \ref{fig9}, the system can be governed by the following system of simultaneous equations:

\begin{align*}
&m_1A_1(t)+(d_1+d_2)V_1(t)-d_2V_2(t)+(k_1+k_2)P_1(t)-k_2P_2(t)=f_{{\rm e},1}(t), \nonumber \\
&m_2A_2(t)-d_2V_1(t)+(d_2+d_3)V_2(t)-d_3V_3(t)-k_2P_1(t)+(k_2+k_3)P_2(t)-k_3P_3(t)=f_{{\rm e},2}(t), \nonumber \\
&\vdots \nonumber \\
&m_nA_n(t)-d_nV_{n-1}(t)+(d_n+d_{n+1})V_n(t)-d_nP_{n-1}(t)+(k_n+k_{n+1})P_n(t)=f_{{\rm e},n}(t), \nonumber \\
\end{align*}
which can be condensed in a more compact form (which is the same as (\ref{eq1})), as

\begin{equation}
\bM\ddot{\bq}(t)+\bD\dot{\bq}(t)+\bK\bq(t)=\bf_{\rm e}(t),
\end{equation}
where
\[\bq(t)=
\begin{pmatrix}
P_1(t) \\
P_2(t) \\
\vdots \\
P_n(t)
\end{pmatrix},
\dot\bq(t)=
\begin{pmatrix}
V_1(t) \\
V_2(t) \\
\vdots \\
V_n(t)
\end{pmatrix},
\ddot\bq=
\begin{pmatrix}
A_1(t) \\
A_2(t) \\
\vdots \\
A_n(t)
\end{pmatrix},\]

\begin{equation}
\bf_{\rm e}(t)=
\begin{pmatrix}
f_{{\rm e},1}(t) \\
f_{{\rm e},2}(t) \\
\vdots \\
f_{{\rm e},n}(t)
\end{pmatrix},
\end{equation}
and the matrices $\bM$, $\bD$, and $\bK$ are symmetric, i.e.,
\begin{equation}\label{eq65}
\bM=
\begin{pmatrix}
m_1 & 0 & 0 & \cdots & 0 \\
0 & m_2 & 0 & \cdots & 0 \\
0 & 0 & \ddots & \ddots & \vdots \\
\vdots & \vdots & \ddots & m_{n-1} & 0 \\
0 & 0 & \cdots & 0 & m_n \\
\end{pmatrix},
\end{equation}

\begin{equation}\label{eq66}
\bD=
\begin{pmatrix}
d_1+d_2 & -d_2 & 0 & \cdots & 0 \\
-d_2 & d_2+d_3 & -d_3 & \ddots & \vdots \\
0 & -d_3 & \ddots & \ddots & 0 \\
\vdots & \ddots & \ddots & d_{n-1}+d_{n} & -d_n \\
0 & \cdots & 0 & -d_n & d_n+d_{n+1} \\
\end{pmatrix},
\end{equation}
and
\begin{equation}\label{eq67}
\bK=
\begin{pmatrix}
k_1+k_2 & -k_2 & 0 & \cdots & 0 \\
-k_2 & k_2+k_3 & -k_3 & \ddots & \vdots \\
0 & -k_3 & \ddots & \ddots & 0 \\
\vdots & \ddots & \ddots & k_{n-1}+k_{n} & -k_n \\
0 & \cdots & 0 & -k_n & k_n+k_{n+1} \\
\end{pmatrix}.
\end{equation}

Here, it should be noted that if the last spring-damper is not considered the values of $d_{n+1}$ and $k_{n+1}$ are zero.

Additionally, we make the following assumption for the system (\ref{eq65}):
\begin{assumption}
The viscous damping matrix $\bD$ is a linear combination of both mass matrix and stiffness matrix, i.e., $\bD$ is a proportional damping (or say Rayleigh damping) and $\bD=\varepsilon\bM+\nu\bK$, where $\varepsilon$ and $\nu$ are constants.
\end{assumption}

Based on the system shown in Figure \ref{fig9}, two numerical cases are studied in this section to validate the proposed identification methods:

\begin{enumerate}[{(1)}]
\item
Different DOFs: in this case, the DOF of the system (\ref{eq65}) is chosen as $3$, $4$, $5$, $6$, $7$, and $8$, respectively, to validate the proposed methods in Section \ref{PSSID} (PSSID) and \ref{BLSID} (blind identification).
\item
Different signal-to-noise ratios (SNRs): in this case, six different SNRs for both input signal and output signal are used to validate the proposed methods.
\end{enumerate}


\section{Practical example}\label{s6}

\section{Discussions and conclusions}\label{s7}


\appendix
\section{Real Jordan form}\label{appendixa}
In this section, prior to introducing the basics of the real Jordan form of a given square matrix, which are used in both Section \ref{PSSID} and Section \ref{BLSID}, the real modal decomposition (a kind of canonical form for matrices with non-repeated eigenvalues) of a two-by-two matrix is considered first for the sake of simplicity.

It is known that if $\lambda=\sigma+j\omega$ is a complex eigenvalue of a real matrix $\bA_{2\times2}\in \mathbb{R}^{2\times2}$, its conjugate $\lambda^{*}$ is also an eigenvalue, and their corresponding eigenvectors are a pair of complex conjugate vectors $(\balpha\pm j\bbeta)$. Therefore, the eigen decomposition of such matrix can be written as
\begin{equation}\label{A1}
\bA_{2\times2} = V\Lambda V^{-1},
\end{equation}
where $$\bV=\begin{pmatrix} \balpha+j\bbeta & \balpha-j\bbeta\end{pmatrix},$$
$$\bLambda=\begin{pmatrix} \sigma+j\omega & 0 \\ 0 & \sigma-j\omega\end{pmatrix},$$
and
$$\bV^{-1}=\begin{pmatrix} \bphi^{\rm T}-j\bpsi^{\rm T} \\ \bphi^{\rm T}+j\bpsi^{\rm T}\end{pmatrix}.$$

So
\begin{equation}\label{A2}
\bA_{2\times2}=2(\sigma\balpha-\omega\bbeta)\bphi^{\rm T}+2(\omega\balpha+\sigma\bbeta)\bpsi^{\rm T}.
\end{equation}

Equation (\ref{A2}) can be also written in the following form that is a real modal decomposition of the matrix $\bA_{\rm m}$, that is
\begin{equation}\label{A3}
\bA_{2\times2} = \bV_{\rm D}\bLambda_{\rm D}\bV_{\rm D}^{-1},
\end{equation}
where $$\bV_{\rm D}=\begin{pmatrix} \sqrt{2}\balpha & \sqrt{2}\bbeta\end{pmatrix},$$
$$\bLambda_{\rm D}=\begin{pmatrix} \sigma & \omega \\ -\omega & \sigma\end{pmatrix},$$
and
$$\bV_{\rm D}^{-1}=\begin{pmatrix} \sqrt{2}\bphi^{\rm T} \\ \sqrt{2}\bpsi^{\rm T}\end{pmatrix}.$$

To obtain the real modal decomposition for larger matrices with unique eigenvalues, the same process can be adopted for complex eigenvalues such that every complex pair of eigenvalues/eigenvectors is substituted with their corresponding real alternatives as described for a two-by-two matrix. Adding the real eigenvalues/eigenvectors to the end of decomposition, a block-diagonalized real modal form would be obtained.

In the case of repeated eigenvalues, the decomposition is slightly different and called real Jordan decomposition. The real Jordan form (denoted as $\bJ$) of an arbitrary matrix $\bA_{N\times N}\in\mathbb{R}^{N\times N}$ with $M(\leqslant N)$ distinct eigenvalues $\lambda_{i}\,(i=1,2,\ldots,M)$, each of which is repeated $s_{i}$ times, is defined by
\begin{equation}\label{A4}
\bA_{N\times N}=\bP\bJ\bP^{-1},
\end{equation}
where the matrix $\bP$ is the eigenvector matrix, and the matrix $\bJ$ is
\begin{equation*}
\bJ=
    \begin{pmatrix}
        \bJ^{(1)} & \bzero & \cdots & \bzero \\
        \bzero & \bJ^{(2)} & \cdots & \bzero \\
        \vdots & \vdots & \ddots & \vdots \\
        \bzero & \bzero & \cdots & \bJ^{(M)} \\
    \end{pmatrix}.
\end{equation*}

Each Jordan block in $\bJ$, i.e., $\bJ^{(i)}$ corresponds to a distinct eigenvalue $\lambda_{i}$ which is repeated $s_{i}$ times and is defined by the following
\begin{equation*}
\bJ^{(i)}=
    \begin{pmatrix}
        \bC_i & \bI_i & \bzero & \cdots & \bzero \\
        \bzero & \bC_i & \bI_i & \cdots & \bzero \\
        \vdots & \vdots & \ddots & \ddots & \vdots \\
        \bzero & \bzero & \bzero & \bC_i & \bI_i \\
        \bzero & \bzero & \bzero & \bzero & \bC_i\\
    \end{pmatrix}\in\mathbb{R}^{s_i\times s_i},
\end{equation*}
where for complex conjugated eigenvalues $(\lambda_{i})_{1,2}=\sigma_i\pm j\omega_i$, $\bC_i$ and $\bI_i$ are defined by $\bC_i=\begin{pmatrix} \sigma_i & \omega_i \\ -\omega_i & \sigma_i\end{pmatrix}$ and $\bI_i=\bI$, respectively, and for real eigenvalues for which $\omega_i=0$, then $\lambda_{i}=\sigma_i$. $\bC_i$ and $\bI_i$ are defined by $\bC_i=\sigma_i$ and $\bI_i=1$, respectively.

\section{Observability}\label{appendixb}

As briefly mentioned in Section \ref{PSSID}, the augmented system (\ref{eq19a}) should be observable. The following proposition describes the observability guarantee of the augmented model under both periodic excitation and non-periodic excitation.
\begin{proposition}\label{propositionappendixb}
If the frequency components of every input signal of the physical system (\ref{eq3}) are the same and meanwhile the system (\ref{eq3}) is observable, the observability of the augmented model (\ref{eq19}) can be checked under the following cases:
\begin{enumerate}[{(1)}]
\item
When the input number is equal to the output number in the system (\ref{eq3}) and if the acceleration sensor is only to measure the output signal, i.e., $\bC_{\rm p}=\bzero$ and $\bC_{\rm v}=\bzero$, then as a result the augmented model (\ref{eq19}) is observable.
\item
When the input number is smaller than the output number in the system (\ref{eq3}), the augmented model (\ref{eq19}) is observable.
\item
When the input number is larger than the output number in the system (\ref{eq3}), the augmented model (\ref{eq19}) is not observable.
\end{enumerate}
\end{proposition}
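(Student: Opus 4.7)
The plan is to verify the Popov--Belevitch--Hautus (PBH) observability test for the augmented pair $(\bA_{\rm a},\bC_{\rm a})$, exploiting the block upper-triangular structure of $\bA_{\rm a}$. Because $\bA_{\rm a}$ has $\bA_{\rm s}$ and $\bA_{\bu}$ on its diagonal, its spectrum is the union $\sigma(\bA_{\rm s})\cup\sigma(\bA_{\bu})$, and the proof splits naturally according to where the tested eigenvalue lies. Throughout I would assume (as discussed around Equation (\ref{eq31})) that the excitation frequencies are chosen so that $\sigma(\bA_{\rm s})\cap\sigma(\bA_{\bu})=\emptyset$, since the coincident case can be excluded by tuning the input frequencies.

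For $\lambda\in\sigma(\bA_{\rm s})$, any vector $(v_1^{\rm T},v_2^{\rm T})^{\rm T}$ in the null space of $\begin{pmatrix}\bA_{\rm a}-\lambda\bI\\ \bC_{\rm a}\end{pmatrix}$ satisfies $(\bA_{\bu}-\lambda\bI)v_2=\bzero$, which forces $v_2=\bzero$ because $\lambda\notin\sigma(\bA_{\bu})$. The remaining equations reduce to the PBH test for $(\bA_{\rm s},\bC_{\rm s})$, so the assumed observability of the physical system in (\ref{eq3}) closes this case. The interesting case is $\lambda\in\sigma(\bA_{\bu})$: then $\bA_{\rm s}-\lambda\bI$ is invertible, one solves $v_1=-(\bA_{\rm s}-\lambda\bI)^{-1}\bB_{\rm s}\bC_{\bu}v_2$, and substituting into the output equation gives
\begin{equation*}
\bigl[\bD_{\rm s}-\bC_{\rm s}(\bA_{\rm s}-\lambda\bI)^{-1}\bB_{\rm s}\bigr]\bC_{\bu}v_2 \;=\; \bG(\lambda)\bC_{\bu}v_2 \;=\; \bzero,
\end{equation*}
where $\bG(\lambda)$ is the transfer function of the physical system evaluated at $\lambda$. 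Observability at $\lambda$ is therefore equivalent to injectivity of the map $v_2\mapsto \bG(\lambda)\bC_{\bu}v_2$ restricted to the $\lambda$-eigenspace $E_\lambda$ of $\bA_{\bu}$.

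The assumption that every input signal has the same frequency components is then used to argue that $E_\lambda$ has dimension exactly $r$ and that the restriction $\bC_{\bu}\vert_{E_\lambda}$ is a bijection onto $\mathbb{R}^r$; this follows by unpacking the block-diagonal realisation in (\ref{eq16}), in which each input contributes one Jordan block per shared frequency and the output row $\bC_{\bu}$ selects the corresponding modal amplitudes. Injectivity of $\bG(\lambda)\bC_{\bu}\vert_{E_\lambda}$ therefore reduces to injectivity of $\bG(\lambda)\in\mathbb{R}^{m\times r}$. Under Assumption \ref{assum2} the transfer function has the explicit second-order form $\bG(s)=s^2(s^2\bM+s\bD+\bK)^{-1}\bB$. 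The stated three sub-cases now follow: (i) when $r=m$ and $\bB$ is square and of full rank, $\bG(\lambda)$ is nonsingular at every excitation frequency that avoids the natural frequencies of the structure, so the PBH test holds and the augmented model is observable; (ii) when $r<m$, $\bG(\lambda)\in\mathbb{R}^{m\times r}$ is tall and, outside the discrete set of transmission zeros, has full column rank, again yielding observability; (iii) when $r>m$, $\bG(\lambda)\in\mathbb{R}^{m\times r}$ is wide and cannot be injective for dimensional reasons, so observability necessarily fails.

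The hard part will be two-fold. First, one must rigorously justify that $\bC_{\bu}\vert_{E_\lambda}$ is a bijection onto $\mathbb{R}^r$; this requires carefully matching the block partition of $\bA_{\bu}$ against $\bC_{\bu}$ and verifying that the $r$ eigenvectors at frequency $\lambda$ produce $r$ linearly independent outputs, which is where the ``same frequency components in every input'' hypothesis is essential. Second, in cases (i) and (ii) one must exclude pathological excitation frequencies at which $\bG(\lambda)$ drops rank (natural frequencies or transmission zeros); since such exceptional frequencies form a discrete set and the designer chooses the excitation spectrum, this is an easy generic argument but should be stated explicitly.
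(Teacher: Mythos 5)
Your proof is essentially correct but takes a genuinely different and more rigorous route than the paper. The paper never invokes the PBH test: it argues directly that observability of the augmented model is equivalent to left-invertibility of the $m\times r$ transfer matrix $\bG_{\rm s}(s)=\bC_{\rm s}(s\bI-\bA_{\rm s})^{-1}\bB_{\rm s}$ (wide matrix $\Rightarrow$ no left inverse $\Rightarrow$ inputs not recoverable $\Rightarrow$ unobservable; tall or square $\Rightarrow$ observable), and for the square acceleration-only case it expands $(s\bI-\bA_{\rm s})^{-1}$ in powers of $1/s$ and checks that the leading Markov parameter $\bC_{\rm s}\bB_{\rm s}=-\bC_{\rm ac}\bM^{-1}\bD\bM^{-1}\bB$ has full column rank. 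Your PBH decomposition of the block upper-triangular $\bA_{\rm a}$ makes precise what the paper only gestures at: it isolates exactly which frequencies matter (the eigenvalues of $\bA_{\bu}$), shows that the "same frequency components in every input" hypothesis is what makes $\bC_{\bu}\vert_{E_\lambda}$ a bijection onto $\mathbb{R}^r$ so that the test collapses to full column rank of $\bG_{\rm s}(\lambda)$ at those specific points, and correctly identifies that one must exclude transmission zeros of $\bG_{\rm s}$ at the excitation frequencies --- a caveat the paper omits entirely (checking the rank of a single Markov parameter does not guarantee $\mathrm{rank}\,\bG_{\rm s}(\lambda)=r$ at a particular $\lambda$). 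What the paper's approach buys is brevity and a concrete sensor-level condition; what yours buys is an actual proof of the PBH rank condition rather than an input-reconstruction heuristic.

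One point you should make explicit: the signal model (\ref{eq16}) contains a DC state with eigenvalue $\lambda=0$, and under acceleration-only sensing $\bG_{\rm s}(s)=s^2(s^2\bM+s\bD+\bK)^{-1}\bB$ vanishes at $s=0$, so your own test shows the augmented pair fails PBH at $\lambda=0$ whenever the offset state is retained. Your cases (1) and (2) therefore hold only for the offset-free signal model (or after deleting the DC block); this defect is inherited from the paper's statement, but your framework is the one that exposes it, so it deserves a sentence.
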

\begin{proof}
The transfer function of the system (\ref{eq3}) can be described as follows
\begin{align}
\bG_{\rm s}(s)&=\bC_{\rm s}(s\bI-\bA_{\rm s})^{-1}\bB_{\rm s} \nonumber \\
&=\begin{pmatrix}
{\bG_{\rm s}(s)}_{11} & {\bG_{\rm s}(s)}_{12} & \cdots & {\bG_{\rm s}(s)}_{1r} \\
{\bG_{\rm s}(s)}_{21} & {\bG_{\rm s}(s)}_{22} & \cdots & {\bG_{\rm s}(s)}_{2r} \\
\vdots & \vdots & \ddots & \vdots \\
{\bG_{\rm s}(s)}_{m1} & {\bG_{\rm s}(s)}_{m2} & \cdots & {\bG_{\rm s}(s)}_{mr} \\
\end{pmatrix}.\label{B1}
\end{align}

According to (\ref{B1}), it is obvious that if the input number is larger than the output number of the physical system (\ref{eq3}), the matrix $\bG_{\rm s}(s)$ does not have a left inverse, as a result the augmented model (\ref{eq19a}) is not observable, i.e., each input cannot be obtained using the output, while if the input number is smaller than the output number, the augmented model (\ref{eq19}) is observable, i.e., each input can be obtained, as a result the state vector of each input signal model can be obtained using the output signal which can indicate every element of the state vector can be obtained sequentially.

So there remains one case that the input number is equal to the output number of the system (\ref{eq3}). This case depends on the combination of measurement sensors. Here, we only consider about the case under only using the acceleration sensor, i.e., $\bC_{\rm p}=\bzero$ and $\bC_{\rm v}=\bzero$. In the following the observability under this case is studied.

By using the matrix inversion lemma \cite{Verhaegen2007}, we can transform (\ref{B1}) into the following form

\begin{align}
\bG_{\rm s}(s)&=\bC_{\rm s}(s\bI-\bA_{\rm s})^{-1}\bB_{\rm s} \nonumber \\
&=\bC_{\rm s}(\frac{1}{s}\bI+\frac{1}{s^2}\bA_{\rm s}-\frac{1}{s^3}\bA_{\rm s}^2)\bB_{\rm s} \nonumber \\
&=\frac{1}{s}\bC_{\rm s}\bB_{\rm s}+\frac{1}{s^2}\bC_{\rm s}\bA_{\rm s}\bB_{\rm s}-\frac{1}{s^3}\bC_{\rm s}\bA_{\rm s}^2\bB_{\rm s}.\label{B2}
\end{align}

Based on (\ref{B2}), it can be known that if the matrix $\bC_{\rm s}\bB_{\rm s}$ in terms of $\frac{1}{s}$ is full column rank, $\bG_{\rm s}(s)$ will be full column rank. So we only check the column of $\bC_{\rm s}\bB_{\rm s}$ to simplify the problem.

Since
\begin{align}
\bC_{\rm s}\bB_{\rm s}&=\begin{pmatrix} \bC_{\rm p}-\bC_{\rm ac}\bM^{-1}\bK & \bC_{\rm v}-\bC_{\rm ac}\bM^{-1}\bD \end{pmatrix}\begin{pmatrix} \bzero \\ \bM^{-1}\bB \end{pmatrix} \nonumber \\
&=(\bC_{\rm v}-\bC_{\rm ac}\bM^{-1}\bD)\bM^{-1}\bB \nonumber \\
&=-\bC_{\rm ac}\bM^{-1}\bD\bM^{-1}\bB, \label{B3}
\end{align}
it is obvious that the matrix $\bC_{\rm s}\bB_{\rm s}$ is full column rank, so $\bG_{\rm s}(s)$ is full column rank, and as a result the augmented system (\ref{eq19a}) is observable.
\end{proof}

It should be noted that in Proposition \ref{propositionappendixb}, we only consider about the case that $\bC_{\rm p}=\bzero$ and $\bC_{\rm v}=\bzero$.

\begin{remark}
In the above proof process, we do not consider about the properties of the every element of $\bG_{\rm s}(s)$, such as the minimum-phase. However, the method proposed in \cite{Han2018a} can handle the non-minimum phase system, which inspired this paper, so we can only focus on the invertibility of $\bG_{\rm s}(s)$ without considering the minimum-phase property of each element in $\bG_{\rm s}(s)$.
\end{remark}

\bibliographystyle{elsarticle-num}
\bibliography{mybibfile}

\end{document}